\documentclass[pra,twocolumn,superscriptaddress,nofootinbib]{revtex4-1}

\usepackage[percent]{overpic} 
\usepackage[english]{babel}
\usepackage{amsmath,amssymb,amsthm,amscd}
\usepackage{dsfont,hyperref,ctable,commath,mathtools,thmtools}
\usepackage{mathrsfs}

\DeclareMathOperator*{\argmin}{argmin}
\DeclareMathOperator{\Tr}{Tr}
\DeclareMathOperator{\aff}{aff}
\DeclareMathOperator{\spn}{span}

\DeclareMathOperator{\supp}{supp}

\DeclareMathOperator{\vol}{\mathsf{vol}}
\DeclareMathOperator{\ddi}{\mathtt{ddi}}
\DeclareMathOperator{\rank}{\operatorname{rank}}
\DeclareMathOperator{\jac}{\operatorname{J}}
\DeclareMathOperator{\hes}{\operatorname{H}}

\newcommand{\mcal}[1]{\mathcal{#1}}
\newcommand{\mbb}[1]{\mathbb{#1}}
\newcommand{\mbf}[1]{\mathbf{#1}}
\newcommand{\R}{\mbb{R}}

\newtheorem{dfn}{Definition}
\newtheorem{lmm}{Lemma}
\newtheorem{thm}{Theorem}
\newtheorem{cor}{Corollary}

\begin{document}

\title{Data-driven inference  and observational completeness
  of quantum devices}

\date{\today}

\author{Michele \surname{Dall'Arno}}

\email{cqtmda@nus.edu.sg}

\affiliation{Centre   for  Quantum   Technologies,  National
  University  of  Singapore,  3  Science  Drive  2,  117543,
  Singapore}

\author{Asaph \surname{Ho}}

\affiliation{Centre   for  Quantum   Technologies,  National
  University  of  Singapore,  3  Science  Drive  2,  117543,
  Singapore}

\author{Francesco \surname{Buscemi}}

\email{buscemi@i.nagoya-u.ac.jp}

\affiliation{Graduate  School  of  Informatics,  Nagoya
  University, Chikusa-ku, 464-8601 Nagoya, Japan}

\author{Valerio \surname{Scarani}}

\email{physv@nus.edu.sg}

\affiliation{Centre   for  Quantum   Technologies,  National
  University  of  Singapore,  3  Science  Drive  2,  117543,
  Singapore}
\affiliation{Department of Physics,  National
  University  of  Singapore,  2  Science  Drive  3,  117542,
  Singapore}

\begin{abstract}
  \textit{Data-driven inference} was  recently introduced as
  a protocol that, upon the input  of a set of data, outputs
  a mathematical  description for a physical  device able to
  explain the data.  The device so inferred is automatically
  \textit{self-consistent}, that  is, capable  of generating
  all  given data,  and \textit{least  committal}, that  is,
  consistent   with  a   minimal  superset   of  the   given
  dataset.  When  applied to  the  inference  of an  unknown
  device,  data-driven inference  has been  shown to  output
  always the  ``true'' device whenever the  dataset has been
  produced by means  of an \textit{observationally complete}
  setup,  which   plays  here   the  same  role   played  by
  informationally  complete setups  in conventional  quantum
  tomography.
  
  In  this paper  we  develop a  unified  formalism for  the
  data-driven inference of states  and measurements.  In the
  case  of qubits,  in  particular, we  provide an  explicit
  implementation  of  the  inference protocol  as  a  convex
  programming algorithm  for the machine learning  of states
  and   measurements.     We   also   derive    a   complete
  characterization of observational completeness for general
  systems,  from  which  it   follows  that  only  spherical
  $2$-designs achieve  observational completeness  for qubit
  systems.  This  result provides  symmetric informationally
  complete  sets  and mutually  unbiased  bases  with a  new
  theoretical and operational justification.
\end{abstract}

\maketitle

\textit{Introduction}. --- The state of a physical system is
the description of its properties, i.~e., of the outcomes of
every possible  measurement. Famously, for  quantum systems,
the outcome of most measurement is not deterministic, and so
the state  is statistical information.  It is a  truism that
physical properties  depend on  the degree of  freedom under
study: measuring  the polarisation  of an optical  mode, the
spin  of a  silver  atom, or  the energy  level  of a  bound
electron    in   an    atom,   each    requires   its    own
instrumentation.  In  the   theoretical  modelling,  various
degrees  of freedom  may be  described by  the same  Hilbert
space:   all  of   the   above-mentioned   could  be   ``one
qubit''. The  formalism of quantum state  reconstruction, or
tomography,    is     then    identical    for     all    of
them~\cite{BCDFP09}.     This    level     of    abstraction
notwithstanding,   tomography   relies    on   an   accurate
\textit{calibration} of  the devices: in order  to interpret
the data, one  needs to know which setting of  the device is
translated as  (say) $\sigma_x$  in the  theory. Calibration
requires  the  usage of  known,  or  trusted, devices,  thus
introducing   circularity  and   potential  errors   in  the
assessment.  Cartesians  are  doomed   to  remain  in  doubt
forever; most  of us  trust experienced  experimentalists to
perform enough  checks and  calibrations to be  confident of
their assessment.

Nevertheless, quantum devices are  currently leaving labs to
enter  the market.  A potential  buyer may  not be  able, or
simply  not  be allowed,  to  scrutinize  the physics  of  a
commercial black  box. All she  may be  allowed to do  is to
query it  and see how  it responds.  This is why  the recent
years have  witnessed a  growth in interest  about assessing
devices   (source,  measurement,   channel...)  using   only
observed  statistics,  the  structure  of  the  theory,  and
possibly a  few other statistical assumptions  like the fact
that   successive   queries    sample   the   same   process
(independent-and-identically-distributed,  or i.i.d.).  Most
of  this work  has focused  on devices  that violate  Bell's
inequalities, and has been called \textit{device-independent
  certification}. This  paper is in a  different line, which
has         been          called         \textit{data-driven
  inference}~\cite{DBBT18,DBV18}. The goal is to produce the
\textit{least  committal}  mathematical description,  within
the  theory,  of a  device  that  could have  generated  the
observed statistics.

We first present a  unified formalisation of the data-driven
inference of states and effects (measurement elements). This
inference is explicitly implemented as a convex optimization
algorithm~\cite{BV04}  for  theories with  (hyper)-spherical
state space, respectively  (hyper)-conical effect space. For
these    same   theories,    we    prove   theorems    about
\textit{observational completeness},  the notion  that plays
in data-driven inference a role  analogous to that played by
informational       completeness       in       conventional
tomography~\cite{DBBT18}. Specifically,  we prove  that only
spherical  $2$-designs  achieve observational  completeness.
For the quantum case of the qubit, it follows that symmetric
informationally   complete  sets~\cite{Zau99,   RBSC03}  and
mutually  unbiased  bases  are  thus  provided  with  a  new
operational interpretation.   We conjecture this to  be true
for quantum systems of arbitrary dimension.

\begin{figure}[ht]
  \begin{overpic}[width=.8\columnwidth]{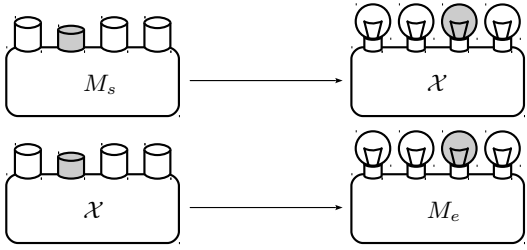}
    \put (15, 5) {$M_s$}
    \put (81, 5) {$\mcal{X}$}
  \end{overpic}
  \\\vspace{2mm}
  \begin{overpic}[width=.8\columnwidth]{fig01}
    \put (15, 5) {$\mcal{X}$}
    \put (81, 5) {$M_e$}
  \end{overpic}
  \caption{Two     ways    of     processing    the     same
    data.     \textbf{Top:}      inference     of     states
    \eqref{eq:states0}: the state  preparator is interpreted
    as  a linear  map $M$  satisfying Eq.~\eqref{eq:states},
    while the effects are represented by a set $\mcal{X}$ of
    vectors.  \textbf{Bottom:}  inference of  a  measurement
    \eqref{eq:meas0}:  the measurement  is interpreted  as a
    linear  map $M$  satisfying Eq.~\eqref{eq:measurements},
    while the states are represented  by a set $\mcal{X}$ of
    vectors.  In  either case,  $M\mcal{X}$  is  the set  of
    probability    vectors    collected   after    (ideally,
    infinitely) many runs.}
  \label{fig:setup}
\end{figure}

\textit{Formalization}.       ---     We      consider     a
prepare-and-measure    scheme    (Figure    \ref{fig:setup})
described in a bilinear  physical theory: the probability of
the  outcome $j\in  [1,..., J]$  when measuring  state $i\in
[1,...,   I]$   is    modelled   by   $p_{ij}=\mbf{e}_j\cdot
\mbf{s}_i=\mbf{e}_j^T    \mbf{s}_i$,   where    the   states
$\mbf{s}_i$ and the effects $\mbf{e}_j$ are (column) vectors
in  a  space  $\mbb{R}^\ell$.   Of  course,  quantum  theory
belongs to  this set  of theories because  of the  Born rule
$p_{ij}=\Tr[\rho_i  E_j]$,  where  $\ell=d^2$ with  $d$  the
Hilbert space dimension.

For the sake of concreteness,  let us provide a paradigmatic
example (detailed in Appendix \ref{app:example}). The source
can produce $I = 3$  states and the measurement is described
by $J = 4$ effects. The \textit{data} are
\begin{align}
  \label{dataexample}
  \mathbf{p} = \left[ \Tr\left[\rho_i E_j\right]\right] =
  \begin{bmatrix}
    \frac12 &  0 & \frac14  & \frac14\\ \frac18 &  \frac38 &
    \frac{2+\sqrt{3}}8  &   \frac{2-\sqrt{3}}8\\  \frac18  &
    \frac38 & \frac{2-\sqrt{3}}8 & \frac{2+\sqrt{3}}8
  \end{bmatrix}.
\end{align}
Since the  rows are  different, we  know trivially  that the
states are  different and that  the effects are  not trivial
(while  a single  row  of data,  i.e.~the  data obtained  by
measuring   a  single   state,   could   always  come   from
$E_j=p_{1j}\openone$).  But with the techniques described in
this paper, one can gather  much more. Indeed, by looking at
the rows,  one can  make the following  \textit{inference on
  the effects}: if  the system is a real  qubit, the effects
are $E_{1,2} = \frac14 (\openone \pm \sigma_z)$ and $E_{3,4}
= \frac14 (\openone  \pm \sigma_x)$ up to  the definition of
these axes in the plane.  By looking at the columns, one can
make the  following \textit{inference on the  states}: again
for a real qubit, the three  states are pure and their Bloch
vectors point at the vertices of an equilateral triangle.

The two inferences have a  very similar formalisation. So we
propose a formal  language applicable to both;  when the two
have to be  differentiated, we shall use  the subscripts $s$
for states and $e$ for effects. To make an \textit{inference
  on the family of states}, we shall study the family of $J$
vectors
\begin{align}
  \label{eq:states0}
  \mbf{x}_{s,j} = \left(p_{1j},  p_{2j}, ..., p_{nj} \right)
  \textrm{ with } n = I,
\end{align}
indexed by  the effect,  whose components are  determined by
the states.  Conversely, to make an \textit{inference on the
  family of  effects (i.~e., on the  measurement)}, we shall
study   the  family   of  $I$   vectors
\begin{align}
  \label{eq:meas0}
  \mbf{x}_{e,i} =  \left(p_{i1}, p_{i2},  ..., p_{in}\right)
  \textrm{ with } n=J,
\end{align}
indexed by the state, whose components are determined by the
measurement.    Compactly:   \textit{a  family   of   states
  (effects) is seen  as a linear map  $M_{s(e)} \in \R^{\ell
    \to n}$ from the space  of effects (states) to the space
  of probabilities}.  Such a linear  map is the object to be
inferred from the dataset.

The   two    maps   defined   by    \eqref{eq:states0}   and
\eqref{eq:meas0}  differ  because $\sum_jp_{ij}=1$  for  all
$i$,   while   $\sum_ip_{ij}$   does   not   obey   such   a
constraint. This  difference has a  geometric interpretation
(Fig.~\ref{fig:spaces}). In $\mathbb{R}^\ell$, let us define
the \textit{unit effect} $\mbf{u}_\ell$, which is the effect
such  that  $\mbf{u}_\ell\cdot  \mbf{s}=1$  for  all  states
$\mbf{s}$. On the one hand, a  family of $n$ states maps the
unit effect onto the vector $\mbf{u}_n\in\mathbb{R}^n$ whose
entries are all ones. Thus,  the map $M_s$ for the inference
of states satisfies
\begin{align}
  \label{eq:states}
  M_s \mbf{u}_\ell = \mbf{u}_n\,.
\end{align} 
On the other hand, a family of $n$ effects maps a state into
a normalised  probability vector \eqref{eq:meas0}:  in other
words, it  maps the hyperplane orthogonal  to $\mbf{u}_\ell$
defined by $\mbf{u}_\ell\cdot \mbf{s}=1$ into the hyperplane
orthogonal   to  $\mbf{u}_n$   defined  by   $\mbf{u}_n\cdot
\mbf{p}=1$. Thus, the map $M_e$ for the inference of effects
satisfies
\begin{align}
  \label{eq:measurements}
  M_e^T \mbf{u}_n = \mbf{u}_\ell\,.
\end{align}

In  fact,  the  actual  choice of  coordinates  for  vectors
$\mbf{u}_\ell$  and  $\mbf{u}_n$  in  Eqs.~(\ref{eq:states})
and~(\ref{eq:measurements})    is    immaterial   for    the
formulation of  the inference protocol. The  only thing that
matters is that a ``special''  vector, with respect to which
the  arrow of  causality is  defined, is  fixed in  any real
space. Hence,  the problem of inference  considered here can
be formulated in a  completely basis-independent fashion. In
other  words, any  linear transformation  of the  underlying
linear spaces does not  affect the inference protocol (while
of course non-linear transformations  would not preserve the
structure of the underlying linear space).

\begin{figure}[t]
  \begin{overpic}[width=.8\columnwidth]{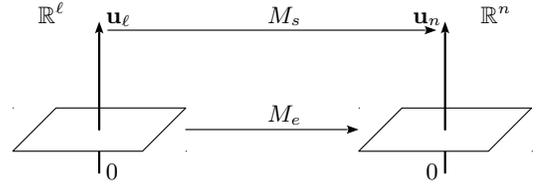}
    \put (18, 29) {$\mathbf{u}_\ell$}
    \put (77, 29) {$\mathbf{u}_n$}
    \put (18, -1) {$0$}
    \put (79.5, -1) {$0$}
    \put (49, 29) {$M_s$}
    \put (49, 10) {$M_e$}
    \put (5, 29) {$\R^\ell$}
    \put (90, 29) {$\R^n$}
  \end{overpic}
  \caption{The  linear space  $\R^\ell$ on  the left  is the
    state/effect   space,   the   vector   $\mathbf{u}_\ell$
    representing   the  unit   effect  and   the  hyperplane
    $\mbf{u}_\ell\cdot   \mbf{s}=1$  being   the  space   of
    states.  The  linear space  $\R^n$  on  the right  is  a
    probability space,  the vector $\mathbf{u}_n$  being the
    vector of  all ones  and the  hyperplane $\mbf{u}_n\cdot
    \mbf{p}=1$ defining probability  distributions. A family
    of states acts as a linear map $M_s \in \R^{\ell \to n}$
    mapping     $\mathbf{u}_\ell$    into     $\mathbf{u}_n$
    [Eq.~\eqref{eq:states}].  A family of  effects acts as a
    linear  map  $M_e  \in  \R^{\ell  \to  n}$  mapping  the
    hyperplane   of   states   into  that   of   probability
    distributions [Eq.~\eqref{eq:measurements}]. In fact, as
    noticed  in   the  main  text,  the   actual  choice  of
    coordinates    of    vectors    $\mathbf{u}_\ell$    and
    $\mathbf{u}_n$ is  immaterial for  the problem  at hand,
    which    can    be    formulated   in    a    completely
    basis-independent fashion.}
  \label{fig:spaces}
\end{figure}

\textit{Data-driven        inference}.        ---        Let
$M\in\mathbb{R}^{\ell\to n}$ be the linear map corresponding
to  a family  of states  (effects) of  a system  with effect
(state) space $\mbb{X}\subset\mathbb{R}^\ell$.  We denote by
$M\mbb{X}\subset\mathbb{R}^n$  the   image  of  $\mathbb{X}$
under $M$. Then, given  the data $\mcal{X} \subseteq\R^n$ as
a  set   of  probability  vectors,   we  say  that   $M$  is
\textit{consistent  with the  data} if  $\mcal{X}\subseteq M
\mbb{X}$. In words: there  exist elements of $\mbb{X}$ that,
acted  upon  by  transformation $M$,  give  the  probability
vectors  $\mcal{X}$. Among  all linear  maps $M$  consistent
with  the  data,  we  are interested  in  the  \textit{least
  committal}  ones.   Here,  we  quantify   the  ``committal
degree''  of  a  linear  map $M$  by  the  \textit{Euclidean
  volume}  of the  set  of probability  vectors  the map  is
consistent with.   This volume,  denoted by  $\vol(M \mbb{X}
)$, coincides with  the volume of the  \textit{range} of the
transformation  $M$~\cite{DBB17,  DBBV17, Dal17},  which  is
known  to  constitute  a  crucial  statistical  property  of
measurements~\cite{clean-POVMs}                          and
ensembles~\cite{quantum-blackwell}.  For example,  the range
of a  pair of  states coincides with  the Lorenz  region (or
testing region) of the  pair~\cite{renes, bus-gour}, and the
corresponding volume is  just the area of  region.  In order
to   avoid  comparing   volumes  of   sets  with   different
dimensionalities,  we   minimize  the  volume   over  linear
transformations  $M$ such  that  $M  \mbb{X} \subseteq  \spn
\mcal{X}$.

Presently we can define the main protocol:
\begin{dfn}[Data-driven inference]
  For  any  $\mcal{X}  \subseteq \R^{n}$  and  any  $\mbb{X}
  \subseteq \R^\ell$, we define
  \begin{align}
    \label{eq:ddi}
    \ddi_{s/e}   \left(   \mcal{X}   |\mbb{X}   \right)   :=
    \argmin_{M } \vol \left(M \mbb{X} \right),
  \end{align}
  where the  optimization is over  the linear maps  $M$ that
  satisfy
  \begin{align}
    \label{eq:consistency}
    \mcal{X} \subseteq M \mbb{X} \subseteq \spn \mcal{X}
  \end{align}
  and  either  Eq.~\eqref{eq:states}  for states  ($s$),  or
  Eq.~\eqref{eq:measurements} for effects ($e$). A pictorial
  sketch is given as Fig.~\ref{fig:inference}.
\end{dfn}

This  definition   should  clarify  that  our   approach  is
insensitive  to linear  transformations  of the  probability
space, as  any such transformation would  rescale the volume
of  any body  by a  constant  that uniquely  depends on  the
transformation, thus not affecting the output of data-driven
inference.

\begin{figure}[t]
  \begin{overpic}[width=.8\columnwidth]{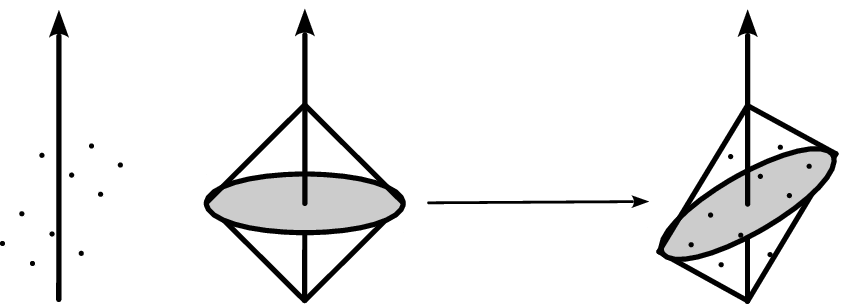}
    \put (8, 32) {$\mbf{u}_n$}
    \put (37, 32) {$\mbf{u}_\ell$}
    \put (89, 32) {$\mbf{u}_n$}
    \put (0, 32) {$\R^n$}
    \put (25, 32) {$\R^\ell$}
    \put (78, 32) {$\R^n$}
    \put (10, 8) {$\mcal{X}$}
    \put (40, 20) {$\mbb{X}_e^\ell$}
    \put (51, 14) {$\ddi_s (\mcal{X} | \mbb{X}_e^\ell)$}
  \end{overpic}
  \\\vspace{5mm}
  \begin{overpic}[width=.8\columnwidth]{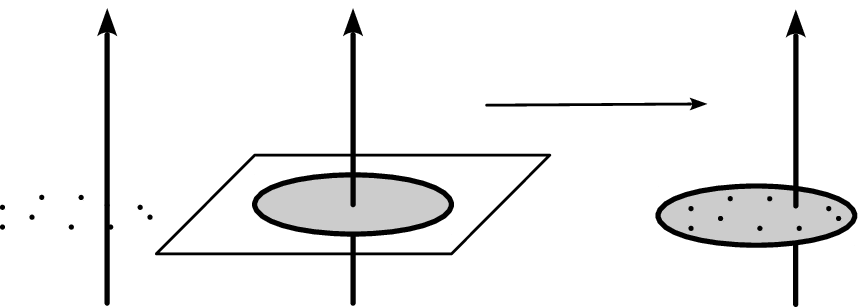}
    \put (14, 32) {$\mbf{u}_n$}
    \put (42, 32) {$\mbf{u}_\ell$}
    \put (94, 32) {$\mbf{u}_n$}
    \put (5, 32) {$\R^n$}
    \put (30, 32) {$\R^\ell$}
    \put (83, 32) {$\R^n$}
    \put (4, 6) {$\mcal{X}$}
    \put (35, 10) {$\mbb{X}_s^\ell$}
    \put (58, 25) {$\ddi_e (\mcal{X} | \mbb{X}_s^\ell)$}
  \end{overpic}
  \caption{\textbf{Top:} taking as input a set $\mcal{X}$ of
    probability vectors (represented as dots) and some prior
    information $\mbb{X}$  about the effect space  (the cone
    $\mbb{X}_e^\ell$    in    the     figure),    the    map
    $\ddi_s(\mcal{X} | \mbb{X} )$ returns the minimum volume
    linear   transformation  of   $\mbb{X}$  that   contains
    $\mcal{X}$, as per  Eq.~\eqref{eq:consistency}, and that
    satisfies    Eq.~\eqref{eq:states}.     \textbf{Bottom:}
    taking  as   input  a  set  $\mcal{X}$   of  probability
    distributions  (represented  as  dots)  and  some  prior
    information $\mbb{X}$ about the  state space (the sphere
    $\mbb{X}_s^\ell$    in    the     figure),    the    map
    $\ddi_e(\mcal{X} | \mbb{X} )$ returns the minimum volume
    linear   transformation  of   $\mbb{X}$  that   contains
    $\mcal{X}$, as per  Eq.~\eqref{eq:consistency}, and that
    satisfies Eq.~\eqref{eq:measurements}.}
  \label{fig:inference}
\end{figure}

\textit{Machine  learning of  states and  measurements}. ---
Given the convexity of  the merit function $\vol(M \mbb{X})$
and   of   the    constraints   in   Eqs.~\eqref{eq:states},
\eqref{eq:measurements},   and~\eqref{eq:consistency},   the
data-driven   inference   map   corresponds  to   a   convex
programming problem~\cite{BV04}.

Notice that,  in general,  the linear space  $\spn \mcal{X}$
can  be of  smaller dimension  than the  linear space  $\spn
\mbb{X}$.  In  this case, the optimization  over linear maps
$M$  that satisfy  Eq.~\eqref{eq:consistency}  can be  split
into:
\begin{itemize}
    \item[i)] the  optimization over a subspace  of the same
      dimension as $\spn \mcal{X}$, followed by
    \item[ii)]  an optimization  over linear  maps $M$  with
      such a subspace as its support.
\end{itemize}

In the case  when $M$ satisfies Eq.~\eqref{eq:measurements},
it  is  further clear  that  $\mbf{u}_\ell$  belongs to  the
support of  $M$.  However,  in the  case when  $M$ satisfies
Eq.~\eqref{eq:states},  $\mbf{u}_\ell$ does  not necessarily
belong to the support of $M$,  unless of course one has that
the  dimension of  $\spn \mcal{X}$  equals $\ell$,  in which
case  the  only possible  subspace  is  the space  $\R^\ell$
itself.      These     situations    are     depicted     in
Fig.~\ref{fig:supports}.

\begin{figure}[t]
  \begin{overpic}[width=.4\columnwidth]{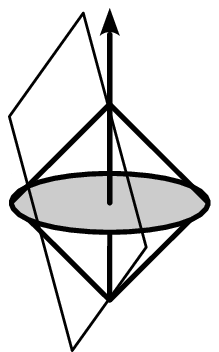}
    \put (52, 84) {$\mathbf{u}_\ell$}
    \put (0, 70) {$\supp M$}
    \put (70, 84) {$\R^\ell$}
  \end{overpic}
  \begin{overpic}[width=.4\columnwidth]{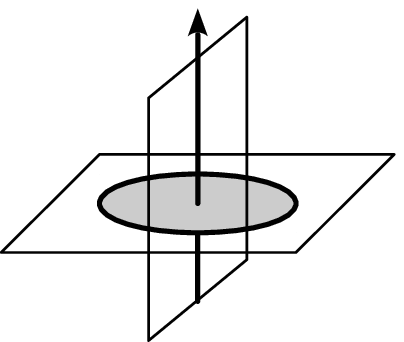}
    \put (52, 84) {$\mathbf{u}_\ell$}
    \put (14, 70) {$\supp M$}
    \put (70, 84) {$\R^\ell$}
  \end{overpic}
  \caption{\textbf{Left:}  conical effect  space around  the
    unit effect $\mbf{u}_\ell$. Any family of states acts as
    a linear  map $M$  whose support, solely  constrained by
    Eq.~\eqref{eq:states},  does   not  necessarily  contain
    $\mathbf{u}_\ell$.    \textbf{Right:}  spherical   state
    space on  the plane  orthogonal to  $\mbf{u}_\ell$.  Any
    measurement acts as a linear  map $M$ whose support, due
    to  Eq.~\eqref{eq:measurements},   necessarily  contains
    $\mathbf{u}_\ell$.}
  \label{fig:supports}
\end{figure}

Let  us consider  now the  case  when the  state and  effect
spaces, denoted  with $\mbb{X}_s^\ell\subset\mathbb{R}^\ell$
and       $\mbb{X}_e^\ell\subset\mathbb{R}^\ell$,       are,
respectively,  the (hyper)-sphere  in  the (hyper)-plane  of
states  orthogonal to  $\mbf{u}_\ell$, and  the (hyper)-cone
around $\mbf{u}_\ell$. This situation  occurs in the case of
classical  and quantum  bits,  with  $\ell=2$ and  $\ell=4$,
respectively.  Due  to the (hyper)-spherical  symmetry, when
inferring    a   measurement    $M_e$,    that   is,    when
Eq.~\eqref{eq:measurements} is satisfied,  the step i) above
corresponds     to    replacing     $\mbb{X}_s^\ell$    with
$\mbb{X}_s^m$, where  $m\le\ell$ is  the dimension  of $\spn
\mcal{X}$.  In  other words,  it  is  enough to  reduce  the
dimension   of   the   state   space,   while   keeping   it
(hyper)-spherical.  On the contrary, when inferring a set of
states  $M_s$,   that  is,  when   Eq.~\eqref{eq:states}  is
satisfied, an equivalent result does not hold: in this case,
the optimization  over the  support of  $M_s$ can  break the
(hyper)-conical symmetry of $\mbb{X}_e$.

For this reason, while  conceptually equivalent, the problem
of inferring  a measurement  is formally different  from the
problem of inferring a set  of states. As a consequence, the
machine learning algorithm that  we analytically develop and
discuss in Appendix~\ref{app:spherical},  while always valid
in  the case  of measurement  inference, can  be applied  to
states inference only when  the dimension of $\spn \mcal{X}$
equals $\ell$.

\textit{Observational completeness}.  --- Let  us now take a
step  backward  and consider  the  experiment  in which  the
dataset $\mcal{X}$ (we recall that $\mcal{X}$ is taken to be
a set of  probability vectors) is generated.  Upon the input
of  a  classical  variable  $i$, for  instance  through  the
pressure  of  a  button,   a  state  preparator  prepares  a
state. The  state is  then fed into  a measurement,  and the
outcome $j$  of the measurement,  which can be modeled  as a
light  bulb lighting  up,  is recorded.   The experiment  is
repeated ideally infinitely many  times, and the frequencies
are    estimated.     This     setup    is    depicted    in
Fig.~\ref{fig:setup}.

In  the protocol  of  \textit{data-driven reconstruction  of
  states}, a family of states $M_s$ acts on a set of effects
$X\subseteq\mathbb{X}_e$,   thus   producing   the   dataset
$\mathcal{X}=M_s X$. In this case, the experimentalist's aim
is to  choose the  ``probe'' measurement $X$  in such  a way
that the data-driven inference  applied to the corresponding
$\mcal{X}$  correctly outputs  the  range of  the family  of
states $M_s$ actually used in the experiment.

In complete analogy, in  the protocol of \textit{data-driven
  reconstruction of measurements}, the experimentalist's aim
is    to   choose    a    family    of   ``probe''    states
$X\subseteq\mathbb{X}_s$, such  that, once  measured through
$M_e$, a dataset $\mcal{X}=M_eX$  is produced, for which the
data-driven inference correctly outputs the range of $M_e$.

The  property  that such  probes  (states,  in the  case  of
measurement  inference;  effects,  in   the  case  of  state
inference) need  to satisfy  in order  that the  protocol of
data-driven  inference   always  succeeds,  is   defined  as
follows:

\begin{dfn}[Observational completeness]
  A set of effects $X \subseteq \mbb{X}_e \subseteq \R^\ell$
  is  observationally complete  for  a set  of states  $M_s$
  whenever
  \begin{align*}
    \ddi_{s} \left( M_sX |  \mbb{X}_e \right) = \left\{ M_s
    \mbb{X}_e \right\}.
  \end{align*}
  Analogoulsy,  a  set  of  states  $X  \subseteq  \mbb{X}_s
  \subseteq  \R^\ell$  is  observationally  complete  for  a
  measurement $M_e$ whenever
  \begin{align*}
    \ddi_{e} \left( M_eX |  \mbb{X}_s \right) = \left\{ M_e
    \mbb{X}_s \right\}.
  \end{align*}
\end{dfn}

In other words, an observationally complete set of states is
such that, when  fed through a measurement,  it provides the
same amount of statistical  information (for the protocol of
data-driven inference) as if the \textit{entire} state space
was measured. An  observationally complete measurement plays
the  same role  in the  inference of  states.  Observational
completeness hence  guarantees that the  maximum information
is  provided to  the inference  protocol. In  this case,  as
shown   in   Ref.~\cite{DBBT18},   the   reconstruction   of
$M\mbb{X}$ allows  for the identification of  the invertible
linear map $M$ up to gauge  symmetries (the case when $M$ is
not  invertible,   also  discussed   in  Ref.~\cite{DBBT18},
involves  more  technicalities),  that   is,  up  to  linear
transformations that  preserve $\mbb{X}$. This is  of course
the maximum level of accuracy that one should expect from an
inference protocol that only  relies on the bare coincidence
data.

\textit{Characterization of observational completeness}. ---
According to its  definition, the observational completeness
of  a set  $\mcal{X}$  depends  upon the  linear  map to  be
reconstructed.  However,  as it had already  been noticed in
Ref.~\cite{DBBT18},  such  a  dependency  turns  out  to  be
limited to  the support  of the linear  map, and  we discuss
here  a  few  important  consequences  of  this  fact.   Let
$\mcal{X}_0$ and  $\mcal{X}_1$ be  two subsets  of $\R^\ell$
related  by   an  invertible  transformation,  that   is  $M
\mcal{X}_0 =  \mcal{X}_1$.  The  following two  facts follow
immediately.  Whenever $M$ is a gauge symmetry, if either of
the two sets is observationally complete for $\R^\ell$, also
the other one  is.  If instead $M$ is not  a gauge symmetry,
then at  most one  between $\mcal{X}_0$ and  $\mcal{X}_1$ is
observationally complete  for $\R^\ell$, but not  both. This
situation is depicted in Fig.~\ref{fig:ocness}.

\begin{figure}[t]
  \begin{overpic}[width=.8\columnwidth]{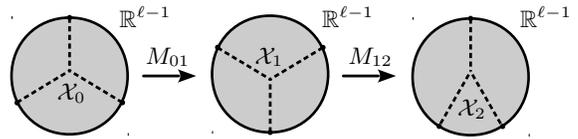}
    \put (21, 21) {$\R^{\ell-1}$}
    \put (60, 21) {$\R^{\ell-1}$}
    \put (98, 21) {$\R^{\ell-1}$}
    \put (26, 14) {$M_{01}$}
    \put (65, 14) {$M_{12}$}
    \put (9, 7) {$\mcal{X}_0$}
    \put (47, 14) {$\mcal{X}_1$}
    \put (86, 4) {$\mcal{X}_2$}
  \end{overpic}
  \caption{The set  $\mbb{X}$ of states is  represented by a
    grey  circle.  Sets  $\mcal{X}_0$  and $\mcal{X}_1$  are
    related  by a  gauge symmetry  (a $\pi$-rotation)  hence
    either both of  them or none of  them is observationally
    complete (in this case, the  former is the case as shown
    in the main text, since  regular simplices are spherical
    $2$  designs).  Sets  $\mcal{X}_1$ and  $\mcal{X}_2$ are
    related by a  linear map which is not  a gauge symmetry,
    hence at most one among them is observationally complete
    (in this case, $\mcal{X}_1$).}
  \label{fig:ocness}
\end{figure}

A closed-form characterization of observational completeness
can be derived for the cases of (hyper)-conical effect space
and  (hyper)-spherical  state  space.    In  this  case,  by
extending  John's theory~\cite{Joh48}  on extremum  problems
with  inequalities  as  subsidiary conditions,  we  show  in
Appendix~\ref{app:spherical}     a      relation     between
observational completeness and spherical designs.

Operationally (for a formal  definition of spherical design,
see Appendix~\ref{app:design}), a spherical $t$-design is an
ensemble  $\{ p_k,  \mbf{v}_k  \}$ (that  is, a  probability
distribution  $p_x$   over  states  $\mbf{v}_k$)   which  is
indistinguishable from  the uniform ensemble over  states on
the  boundary of  the  (hyper)-sphere, when  $t$ copies  are
given.   We  say  that  a set  $\{  \mbf{v}_k  \}  \subseteq
\R^\ell$  supports  a  $t$-design whenever  there  exists  a
probability  distribution $\{  p_k  \}$ such  that $\{  p_k,
\tilde{\mbf{v}}_k    \}$    is     a    $t$-design,    where
$\tilde{\mbf{v}}_k  :=  (\mbf{u}_\ell \cdot  \mbf{v}_k)^{-1}
\mbf{v}_k$ lie on the (hyper)-plane of states.

We have  then the following closed-form  characterization of
observational completeness for  systems with (hyper)-conical
effect   space  or   (hyper)-spherical  state   space.   Let
$\mcal{X}$ be a set of  states or effects, that is $\mcal{X}
\subseteq    \mbb{X}_s^\ell$    or    $\mcal{X}    \subseteq
\mbb{X}_e^\ell$,   respectively.   If   set  $\mcal{X}$   is
observationally complete  for an invertible linear  map $M$,
then  $\mcal{X}$  supports  a  spherical  $2$  design.   The
generalization   of   this   statement  to   the   case   of
non-invertible linear map  $M$ involves some technicalities,
and is therefore deferred to Appendix~\ref{app:spherical}.

If  $\mcal{X}$  is  a  set  of  states,  that  is  $\mcal{X}
\subseteq \mbb{X}_s^\ell$, also the vice-versa is true. That
is,  if $\mcal{X}$  supports  a  spherical $2$-design,  then
$\mcal{X}$  is observationally  complete for  any invertible
linear map  $M$. Again,  the generalization  to the  case of
non-invertible    linear   map    $M$    is   deferred    to
Appendix~\ref{app:spherical}.    We  conjecture   a  similar
result to  hold if $\mcal{X}$ is  a set of effects,  that is
$\mcal{X} \subseteq \mbb{X}_e^\ell$.

The following two facts follow as immediate corollaries. The
minimum cardinality observationally complete set for a qubit
is the symmetric, informationally complete set. As a further
corollary, the minimum  cardinality observationally complete
set  of basis  for a  qubit  system are  the three  mutually
unbiased  bases.  These  result  provide  a new  operational
interpretation to these sets, based on data-driven inference
rather than on their purely mathematical definition in terms
of equiangular vectors.

\textit{Conclusion}.    ---  Data-driven   inference  is   a
protocol  that,  upon the  input  of  a set  of  probability
vectors, outputs the mathematical description for a physical
device. Such  a description is self-consistent,  that is, it
can generate the given probability vectors.  Moreover, it is
minimally  committal, that  is,  it is  consistent with  the
minimal set of probability vectors.

In  this  work, we  provided  a  unified formalism  for  the
data-driven inference  in the  cases where  the mathematical
description  is in  terms  of states  and measurements.  For
systems    with    (hyper)-conical     effect    space    or
(hyper)-spherical   state  space,   we  provided   a  convex
programming algorithm for the machine learning of states and
measurements based on data-driven inference.

Observational completeness is the  property of any apparatus
that, when applied to a target device, generates probability
vectors  for  which  the  output  of  data-driven  inference
coincides  with  the range  of  the  device itself.   Hence,
observational completeness  plays for  data-driven inference
the  same  role  played by  informational  completeness  for
conventional tomography.

In  this  work,  we  provided  a  full  characterization  of
observational  completeness.   Our  characterization  is  in
closed-form for systems with (hyper)-conical effect space or
(hyper)-spherical state space,  in which cases observational
completeness for  a set implies  that such a set  supports a
spherical $2$-design. We showed  that the vice-versa is true
for sets  of states, and  we conjectured  it to be  the case
also   for   sets   of  effects.    Accordingly,   symmetric
informationally  complete sets  and mutually  unbiased bases
are  minimal cardinality  observationally  complete sets  of
vectors   and   bases,   respectively.    We   conclude   by
conjecturing  that   for  arbitrarily   dimensional  quantum
systems, quantum  $2$-designs coincide  with observationally
complete sets.

\textit{Acknowledgement}.  --- This work is supported by the
National  Research  Fund  and  the  Ministry  of  Education,
Singapore,   under  the   Research  Centres   of  Excellence
programme;  and   partly  supported   by  the   program  for
FRIAS-Nagoya  IAR Joint  Project Group.   F.~B. acknowledges
partial support from the Japan  Society for the Promotion of
Science (JSPS) KAKENHI, Grant No. 19H04066.

\appendix

\setcounter{dfn}{0}
\setcounter{thm}{0}

\section{An Example}\label{app:example}

As an  example, we  consider a source  that can  produce the
three pure states of a real qubit
\begin{align*}
  \rho_1   &  =   \frac12  \left(   \openone  +   \sigma_{z}
  \right),\\   \rho_2   &   =   \frac12   \left(\openone   +
  \frac{\sqrt{3}}2                \sigma_{x}               -
  \frac12\sigma_{z}\right),\\  \rho_3  &  =  \frac12  \left(
  \openone   -   \frac{\sqrt{3}}2   \sigma_{x}   -   \frac12
  \sigma_{z}\right),
\end{align*}
and a measurement described by the effects
\begin{align*}
  E_1 & = \frac14  \left(\openone + \sigma_{z}\right),\\ E_2
  & = \frac14 \left(\openone  - \sigma_{z}\right),\\ E_3 & =
  \frac14  \left(\openone  +  \sigma_{x}\right),\\ E_4  &  =
  \frac14 \left(\openone - \sigma_{x}\right).
\end{align*}
It is easy to check that this example gives rise to the data
given in Eq.~\eqref{dataexample} of the main text.

First  let  us consider  the  case  of \textit{inference  of
  measurements}. Each state $\rho_i$  has associated with it
the vector $\mbf{x}_i$ where $\left(\mbf{x}_{i}\right)_{j} =
p_{ij}=  P\left(E_{j}|\rho_{i}\right)$.  Therefore,  we will
have 3 points in $\mbb{R}^{4}$:
\begin{align*}
  \mbf{x}_1   &  =   \left[\frac12,   0,  \frac14,   \frac14
    \right]^T,\\  \mbf{x}_2  &   =  \left[\frac18,  \frac38,
    \frac{2+\sqrt{3}}8,                   \frac{2-\sqrt{3}}8
    \right]^T,\\  \mbf{x}_3  &   =  \left[\frac18,  \frac38,
    \frac{2-\sqrt{3}}8, \frac{2+\sqrt{3}}8\right]^T.
\end{align*}
These   points    are   in   a   2-dimensional    plane   in
$\mbb{R}^{4}$.  In this  plane, any  measurement defines  an
ellipsoid as the set of all  the vectors it can produce. The
measurement being  used must  of course define  an ellipsoid
that contains  the three  observed points, and  $\ddi$ finds
the  consistent ellipsoid  with  the  smallest volume.   The
inferred range is  then inverted to give the  effects, up to
symmetries.

Then we  consider the case of  \textit{inference of states}.
This  time,  to  each   effect  one  associates  the  vector
$\mbf{x}_i$ where $(\mbf{x}_{j})_i = p_{ij}$.  Thus, we will
now have 4 points in $\mbb{R}^{3}$:
\begin{align*}
  \mbf{x}_1                        &                       =
  \left[\frac12,\frac18,\frac18\right]^T,\\  \mbf{x}_2  &  =
  \left[0,\frac38,\frac38\right]^T,\\    \mbf{x}_3    &    =
  \left[\frac14,\frac{2+\sqrt{3}}8,\frac{2-\sqrt{3}}8\right]^T,\\ \mbf{x}_4
  &                                                        =
  \left[\frac14,\frac{2-\sqrt{3}}8,\frac{2+\sqrt{3}}8\right]^T.
\end{align*}
The next  step now is  to find the linear  transformation of
the space of effects -  that preserves the null and identity
effects -  that contains all  four points and is  of minimal
volume. This volume is then  inverted to find the states (up
to symmetries) that induce this linear transformation of the
space of effects.

\section{Formalization}\label{app:formalization}

In these appendices, for compactness the subscripts $s$ and $e$ adopted in the main text are replaced by $+$ and $-$, respectively.

For  any  $\mcal{X}  \subseteq  \R^n$ let  us  define  $\vol
(\mcal{X})$ as  the Euclidean volume of  $\mcal{X}$ on $\aff
\mcal{X}$.  One immediately has
\begin{align}
  \label{eq:volume}
  \vol    \left(M   \mbb{X}\right)    =    \left|   M^T    M
  \right|_+^{\frac12} \vol \left( M^+ M \mbb{X} \right).
\end{align}

Let  us introduce  a family  $\{ \mbf{u}_n  \in \R^n  \}$ of
vectors  and two  families  $\mcal{M}_\pm^{\ell  \to n}$  of
linear transformations
\begin{align*}
  \mcal{M}_+^{\ell \to n} & := \left\{ M \in \R^{\ell \to n}
  \;    \Big|     \;    M    \mbf{u}_\ell     =    \mbf{u}_n
  \right\},\\  \mcal{M}_-^{\ell \to  n} &  := \left\{  M \in
  \R^{\ell \to n}  \; \Big| \; M^T  \mbf{u}_n = \mbf{u}_\ell
  \right\}.
\end{align*}
Notice that if $M \in  \mcal{M}_+^{\ell \to n}$ one has $M^+
M \mbf{u}_\ell  \neq 0$ and  if $M \in  \mcal{M}_-^{\ell \to
  n}$ one  has $M^+ M \mbf{u}_\ell  = \mbf{u}_\ell$.  Notice
also that  if $M_0 \in  \mcal{M}_\pm^{\ell \to n}$  and $M_1
\in   \mcal{M}_\pm^{n  \to   m}$  one   has  $M_1   M_0  \in
\mcal{M}_\pm^{\ell \to  m}$. Notice  finally that if  $M \in
\mcal{M}_\pm^{\ell  \to n}$  and $M$  is invertible  one has
$M^{-1} \in \mcal{M}_\pm^{n \to \ell}$.

For  any  $\mbb{X}  \subseteq  \R^\ell$  and  any  $\mcal{X}
\subseteq \R^n$ let us define
\begin{align*}
  \mcal{L} \left(  \mcal{X} |  \mbb{X} \right) :=  \left\{ M
  \in  \R^{\ell   \to  n}  |  \mcal{X}   \subseteq  M\mbb{X}
  \subseteq \spn\mcal{X} \right\},
\end{align*}
and let $\mcal{L}_\pm  ( \mcal{X} | \mbb{X} )  := \mcal{L} (
\mcal{X} | \mbb{X} ) \cap \mcal{M}_\pm^{\ell \to n}$.

\begin{dfn}[Data-driven inference]
  \label{def:inference}
  For  any $\mbb{X}  \subseteq  \R^\ell$  and any  $\mcal{X}
  \subseteq \R^n$, let us define
  \begin{align*}
    \ddi_\pm \left( \mcal{X}  |\mbb{X} \right) := \argmin_{M
      \in  \mcal{L}_\pm \left(  \mcal{X} |  \mbb{X} \right)}
    \vol \left(M \mbb{X} \right).
  \end{align*}
\end{dfn}

\begin{dfn}[Observational completeness]
  \label{def:ocness}
  Any given  $\mcal{X} \subseteq \mbb{X}  \subseteq \R^\ell$
  is  OC with  respect to  $\mbb{X}$  for any  given $L  \in
  \mcal{M}_\pm^{\ell \to n}$ if and only if
  \begin{align*}
    \ddi_\pm \left( L\mcal{X} |  \mbb{X} \right) = \left\{ L
    \mbb{X} \right\}.
  \end{align*}
\end{dfn}

\section{General results}\label{app:general}

For  any  $\mbb{X}  \subseteq  \R^\ell$  and  any  $\mcal{X}
\subseteq \R^n$, let us define
\begin{align*}
  \Pi_\pm   \left(    \mcal{X}   |   \mbb{X}    \right)   :=
  \argmin_{\substack{\Pi = \Pi^2\\\rank \Pi = m}} \vol\left(
  \ddi_\pm \left( \mcal{X} | \Pi \mbb{X} \right) \right),
\end{align*}
where $m := \dim \spn \mcal{X}$. One immediately has
\begin{align*}
  \ddi_\pm   \left(   \mcal{X}   |   \mbb{X}   \right)   =
  \bigcup\limits_{\Pi  \in \Pi_\pm(  \mcal{X} |  \mbb{X})}
  \ddi_\pm \left( \mcal{X} | \Pi \mbb{X} \right).
\end{align*}

By   explicit  computation,   for  any   $\mbb{X}  \subseteq
\R^\ell$,  any $\mcal{X}  \subseteq  \R^n$, and  any $L  \in
\mcal{M}_\pm$ such that $L^+ L \mbb{X} = \mbb{X}$ one has
\begin{align*}
  \ddi_\pm  \left( \mcal{X}  |  \mbb{X}  \right) =  \ddi_\pm
  \left( \mcal{X} | L \mbb{X} \right).
\end{align*}

\begin{lmm}[Commutativity]
  \label{lmm:commutativity}  
  For  any   $\mbb{X}  \subseteq  \R^\ell$,   any  $\mcal{X}
  \subseteq \R^n$,  and any  $L \in \mcal{M}_\pm$  such that
  $L^+ L \mcal{X} = \mcal{X}$ one has
  \begin{align}
    \label{eq:commutativity0}
    \ddi_\pm  \left(  \mcal{X} |  \mbb{X}  \right)  & =  L^+
    \ddi_\pm      \left(       L      \mcal{X}|      \mbb{X}
    \right),\\  \label{eq:commutativity1} L  \ddi_\pm \left(
    \mcal{X}  |  \mbb{X}  \right)  &  =  \ddi_\pm  \left(  L
    \mcal{X}| \mbb{X} \right).
  \end{align}
\end{lmm}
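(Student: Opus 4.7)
The plan is to exhibit a bijection between the feasible sets $\mcal{L}_\pm(\mcal{X}|\mbb{X})$ and $\mcal{L}_\pm(L\mcal{X}|\mbb{X})$ that rescales the objective $\vol(M\mbb{X})$ by a constant, so that minimising on one side is equivalent to minimising on the other. The bijection will be $M \mapsto LM$, with inverse $N \mapsto L^+ N$. The hypothesis $L^+ L \mcal{X} = \mcal{X}$ says precisely that $\mcal{X} \subseteq (\ker L)^\perp$, hence $L$ restricts to a linear isomorphism $\spn \mcal{X} \to \spn L\mcal{X}$ whose inverse on $\spn L\mcal{X}$ is given by $L^+$.

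First I would check that $M \mapsto LM$ sends $\mcal{L}_\pm(\mcal{X}|\mbb{X})$ into $\mcal{L}_\pm(L\mcal{X}|\mbb{X})$: the consistency condition $\mcal{X} \subseteq M\mbb{X} \subseteq \spn \mcal{X}$ becomes, after applying $L$, $L\mcal{X} \subseteq LM\mbb{X} \subseteq L\,\spn \mcal{X} = \spn L\mcal{X}$, while $LM \in \mcal{M}_\pm$ follows from the composition property of $\mcal{M}_\pm$ noted in the excerpt. Symmetrically, for $N \in \mcal{L}_\pm(L\mcal{X}|\mbb{X})$ the image $N\mbb{X}$ sits in $\spn L\mcal{X} \subseteq \rng L$, so $LL^+ N = N$; this lets me recover $\mcal{X} \subseteq L^+ N \mbb{X} \subseteq \spn \mcal{X}$ by invoking the hypothesis, and verify $L^+ N \in \mcal{M}_\pm$ by tracking how $L^+$ transports the distinguished vectors. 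These two assignments are then mutual inverses on the respective feasible sets. For the objective, because $M\mbb{X} \subseteq \spn \mcal{X}$ and $L$ is a linear isomorphism on $\spn \mcal{X}$, we have $\vol(LM\mbb{X}) = c_L \, \vol(M\mbb{X})$, with $c_L$ the absolute Jacobian of $L|_{\spn \mcal{X}}$, strictly positive and independent of $M$; the explicit expression follows from Eq.~\eqref{eq:volume} applied to $LM$ together with $L^+ L M \mbb{X} = M \mbb{X}$.

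Multiplication of the objective by a positive constant does not affect $\argmin$, so the bijection identifies the two sets of minimisers, proving Eq.~\eqref{eq:commutativity0}; applying $L$ to both sides and using $LL^+ N = N$ on $\rng N \subseteq \spn L\mcal{X}$ then gives Eq.~\eqref{eq:commutativity1}. The main obstacle I anticipate is the $\mcal{M}_\pm$ bookkeeping for $L^+ N$, which requires $L^+$ to act correctly on the unit vectors. In the $-$ case this is automatic since $\mbf{u}_n = L^T \mbf{u}_m \in (\ker L)^\perp$, giving $(L^+ N)^T \mbf{u}_n = N^T (L^T)^+ L^T \mbf{u}_m = N^T \mbf{u}_m = \mbf{u}_\ell$. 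In the $+$ case one must additionally use $\mbf{u}_m = L \mbf{u}_n$, so that $L^+ \mbf{u}_m$ is the orthogonal projection of $\mbf{u}_n$ onto $(\ker L)^\perp$; the remark in the excerpt that $M^+ M \mbf{u}_\ell$ need not equal $\mbf{u}_\ell$ for $M \in \mcal{M}_+$ highlights precisely this subtlety, which the argument must resolve either by invoking the hypothesis on $\mcal{X}$ to ensure that the freedom along $\ker L$ is irrelevant to the consistency constraint, or by restricting attention to the gauge-symmetry regime where $L$ is invertible and the issue disappears.
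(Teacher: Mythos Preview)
Your proposal is correct and follows essentially the same route as the paper: establish that $M\mapsto LM$ and $N\mapsto L^+N$ are mutually inverse maps between $\mcal{L}_\pm(\mcal{X}|\mbb{X})$ and $\mcal{L}_\pm(L\mcal{X}|\mbb{X})$, observe that the volume rescales by an $M$-independent factor (the paper does this via the pseudo-determinant formula in Eq.~\eqref{eq:volume}, you via the Jacobian of $L|_{\spn\mcal{X}}$), and conclude that the $\argmin$ sets correspond; Eq.~\eqref{eq:commutativity1} is then obtained from Eq.~\eqref{eq:commutativity0} exactly as you do. The $\mcal{M}_\pm$ bookkeeping you flag for $L^+N$ is precisely what the paper absorbs into the phrase ``by direct computation,'' so your treatment is, if anything, more explicit than the original.
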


\begin{proof}
  By      direct      computation     $L^+      \mcal{L}_\pm
  (L\mcal{X}|\mbb{X})   \subseteq  \mcal{L}_\pm   (\mcal{X}|
  \mbb{X})$  and   $L  \mcal{L}_\pm  (\mcal{X}   |  \mbb{X})
  \subseteq \mcal{L}_\pm  (L \mcal{X}|\mbb{X})$.  Hence $L^+
  \mcal{L}_\pm  (L   \mcal{X}  |  \mbb{X})   =  \mcal{L}_\pm
  (\mcal{X}  | \mbb{X})$  and  $L  \mcal{L}_\pm (\mcal{X}  |
  \mbb{X}) = \mcal{L}_\pm (L\mcal{X} | \mbb{X})$. Hence
  \begin{align*}
    \ddi_\pm  \left( \mcal{X}|\mbb{X}  \right) =  \argmin_{M
      \in \mcal{L}_\pm \left( L  \mcal{X} | \mbb{X} \right)}
    f\left(L^+M \mbb{X} \right).
  \end{align*}
  
  Since     $\dim    \spn     \mbb{X}     =    \ell$,     by
  Definition~\ref{def:inference} for any $M \in \mcal{L}_\pm
  (L \mcal{X}|\mbb{X})$ one has that  $MM^+ \le LL^+$ is the
  projector on  $\spn L \mcal{X}$.   Hence $| (L^+  M)^T L^+
  M|_+ = | M |_+^2 | (L^+ M M^+)^T L^+ M M^+ |_+$.  Hence by
  Definition~\ref{def:inference} one has
  \begin{align*}
    \argmin_{M  \in  \mcal{L}_\pm  \left( L  \mcal{X}  |  \mbb{X}
      \right)}  f\left(L^+M,  \mbb{X}\right) =  L^+  \ddi_\pm
    \left( L \mcal{X}|\mbb{X} \right).
  \end{align*}

  Thus Eq.~\eqref{eq:commutativity0} follows.   Since $L^+ L
  \ddi_\pm  (\mcal{X}  |  \mbb{X}) =  \ddi_\pm  (\mcal{X}  |
  \mbb{X})$, Eq.~\eqref{eq:commutativity1} follows.
\end{proof}

\begin{thm}[Data-driven inference]
  \label{thm:inference}
  Let  $\mbb{X} \subseteq  \R^\ell$ and  $\mcal{X} \subseteq
  \R^n$  and  $m  :=  \dim \supp  \mcal{X}$.   For  any  any
  $\mcal{M} \subseteq  \mcal{M}_\pm^{\ell \to m}$  such that
  $\supp \mcal{L} =  \Pi_\pm (\mcal{X} | \mbb{X}  )$ and any
  $L \in \mcal{M}_\pm^{n  \to m}$ such that $\supp  L = \spn
  \mcal{X}$ , one has
  \begin{align*}
    \ddi_\pm  \left(  \mcal{X}  |   \mbb{X}  \right)  =  L^+
    \bigcup\limits_{M  \in   \mcal{M}}  \ddi_\pm   \left(  L
    \mcal{X} | M \mbb{X} \right).
  \end{align*}
\end{thm}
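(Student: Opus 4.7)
The plan is to chain three structural facts already established in Appendix~\ref{app:general}: the support decomposition
$$\ddi_\pm(\mcal{X}|\mbb{X}) = \bigcup_{\Pi \in \Pi_\pm(\mcal{X}|\mbb{X})}\ddi_\pm(\mcal{X}|\Pi\mbb{X}),$$
the $\mbb{X}$-side invariance $\ddi_\pm(\mcal{X}|\mbb{X}) = \ddi_\pm(\mcal{X}|K\mbb{X})$ valid whenever $K^+K\mbb{X}=\mbb{X}$, and the commutativity Lemma~\ref{lmm:commutativity} on the $\mcal{X}$-side. The first contracts the prior $\mbb{X}$ onto a rank-$m$ subspace; the second lets that subspace be implemented by an honest linear map into $\R^m$; the third commutes an auxiliary $L\in\mcal{M}_\pm^{n\to m}$ past $\ddi$ on the data side.

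Concretely, I would first apply the support decomposition. By the hypothesis on $\mcal{M}$, the collection of supports of its elements coincides with $\Pi_\pm(\mcal{X}|\mbb{X})$, so every projector $\Pi$ in the index set is of the form $\Pi=M^+M$ for some $M\in\mcal{M}$, and the union can be re-indexed over $M\in\mcal{M}$. Next, using $MM^+M=M$ together with the $\mbb{X}$-invariance applied with $\mbb{X}':=\Pi\mbb{X}$ (which satisfies $M^+M\mbb{X}'=\mbb{X}'$), each summand $\ddi_\pm(\mcal{X}|\Pi\mbb{X})$ is identified with $\ddi_\pm(\mcal{X}|M\mbb{X})$, after transporting the ambient space on the $\mbb{X}$-side from $\supp M\subset\R^\ell$ to $\R^m$ via $M$ and its pseudoinverse. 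Finally, since $\supp L=\spn\mcal{X}$ forces $L^+L\mcal{X}=\mcal{X}$, Lemma~\ref{lmm:commutativity} gives $\ddi_\pm(\mcal{X}|M\mbb{X}) = L^+\ddi_\pm(L\mcal{X}|M\mbb{X})$ for every $M\in\mcal{M}$. Taking the union over $\mcal{M}$ yields the claimed identity.

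The main obstacle I expect to encounter is the ambient-dimension reduction on the $\mbb{X}$-side. The stated invariance preserves the ambient dimension $\ell$, whereas the passage from $\Pi\mbb{X}\subseteq\R^\ell$ to $M\mbb{X}\subseteq\R^m$ genuinely lowers it from $\ell$ to $m$. Justifying this step requires unpacking Definition~\ref{def:inference} and verifying that the Euclidean volume appearing in the merit function, the consistency constraint~\eqref{eq:consistency}, and membership in $\mcal{M}_\pm$ all descend to the quotient problem through the canonical isomorphism $M:\supp M\to\R^m$ with inverse $M^+$ restricted to $\R^m$. Once this bookkeeping is done, the $L^+$ prefactor and the union over $M$ pull out cleanly, and the set equality of the two families of optimizers follows by direct substitution.
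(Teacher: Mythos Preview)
Your proposal is correct and follows essentially the same route as the paper: the paper's proof is the single line ``directly follows from the application of Lemma~\ref{lmm:commutativity},'' which tacitly chains the two displayed facts stated just before the lemma (the support decomposition over $\Pi_\pm(\mcal{X}|\mbb{X})$ and the $\mbb{X}$-side invariance $\ddi_\pm(\mcal{X}|\mbb{X})=\ddi_\pm(\mcal{X}|L\mbb{X})$ for $L^+L\mbb{X}=\mbb{X}$) with Eq.~\eqref{eq:commutativity0}. Your explicit unpacking of these three steps, and your flagging of the ambient-dimension bookkeeping on the $\mbb{X}$-side, are exactly the details the paper suppresses; note that the paper's invariance statement does not fix the target dimension of $L$, so with $\mbb{X}$ replaced by $\Pi\mbb{X}$ and $L=M\in\mcal{M}_\pm^{\ell\to m}$ satisfying $M^+M=\Pi$, the reduction from $\R^\ell$ to $\R^m$ is already covered.
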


\begin{proof}
  The  statement directly  follows from  the application  of
  Lemma~\ref{lmm:commutativity}.
\end{proof}

\begin{thm}[Observational completeness]
  \label{thm:ocness}
  Let $\mcal{X} \subseteq \mbb{X}  \subseteq \R^\ell$ and $m
  := \dim  \supp \mcal{X}$.  One  has that $\mcal{X}$  is OC
  for  $N \in  \mcal{M}_\pm^{\ell  \to n}$  with respect  to
  $\mbb{X}$   if   and  only   if   there   exists  $L   \in
  \mcal{M}_\pm^{\ell  \to  m}$ with  $L^+  L  = N^+  N$  and
  $\mcal{M} \subseteq \mcal{M}_\pm^{\ell \to m}$ with $\supp
  \mcal{M} = \Pi_\pm (\mcal{X} | \mbb{X} )$ such that
  \begin{align*}
    \bigcup\limits_{M \in \mcal{M}}  \ddi_\pm ( L \mcal{X}_1
    | M \mcal{X}_0 ) = \left\{ L \mcal{X}_0 \right\}.
  \end{align*}
\end{thm}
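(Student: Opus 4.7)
The plan is to unfold the observational completeness condition $\ddi_\pm(N\mcal{X}|\mbb{X})=\{N\mbb{X}\}$ supplied by Definition~\ref{def:ocness} by means of the decomposition in Theorem~\ref{thm:inference}, and then to identify the resulting objects with the $L$ and $\mcal{M}$ appearing in the statement. The converse direction will be obtained simply by reversing the chain of equivalences.

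First I would apply Theorem~\ref{thm:inference} to the pair $(N\mcal{X},\mbb{X})$ with $N\mcal{X}\subseteq\R^n$, picking $L_0\in\mcal{M}_\pm^{n\to m}$ with $\supp L_0=\spn N\mcal{X}$; here $\dim\spn N\mcal{X}=m:=\dim\supp\mcal{X}$ because, if $\mcal{X}$ is OC for $N$, then $N$ must be injective on $\spn\mcal{X}$ (otherwise $N\mbb{X}$ could not be uniquely singled out as the inference output). The theorem then gives
\begin{align*}
\ddi_\pm(N\mcal{X}|\mbb{X}) = L_0^+\bigcup_{M\in\mcal{M}_0}\ddi_\pm(L_0 N\mcal{X}|M\mbb{X}),
\end{align*}
with $\supp\mcal{M}_0 = \Pi_\pm(N\mcal{X}|\mbb{X})$.

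Next I would set $L := L_0 N\in\mcal{M}_\pm^{\ell\to m}$. Because $L_0$ is invertible on its support $\spn N\mcal{X}$ (which contains the image $N\mbb{X}$), a short pseudoinverse computation yields $L^+L = N^+L_0^+L_0 N = N^+N$, matching the constraint on $L$. To identify the $\mcal{M}$ of the statement, I would invoke Lemma~\ref{lmm:commutativity} to prove that $\Pi_\pm(N\mcal{X}|\mbb{X})=\Pi_\pm(\mcal{X}|\mbb{X})$: since $N$ belongs to $\mcal{M}_\pm$ and is invertible on $\spn\mcal{X}$, it only rescales volumes of images of subspaces by a fixed factor, so the argmin over rank-$m$ projectors is unchanged. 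Hence I may take $\mcal{M}:=\mcal{M}_0$, now viewed as a subset of $\mcal{M}_\pm^{\ell\to m}$ with $\supp\mcal{M}=\Pi_\pm(\mcal{X}|\mbb{X})$.

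Finally, applying $L_0^+L_0 N\mbb{X} = N\mbb{X}$ (valid because $N\mbb{X}\subseteq\spn N\mcal{X}=\supp L_0$), the observational completeness condition becomes equivalent to
\begin{align*}
\bigcup_{M\in\mcal{M}}\ddi_\pm(L\mcal{X}|M\mbb{X}) = \{L\mbb{X}\},
\end{align*}
which is exactly the claim (reading the symbols $\mcal{X}_0,\mcal{X}_1$ in the statement as $\mbb{X},\mcal{X}$). The hardest step is the identification $\Pi_\pm(N\mcal{X}|\mbb{X})=\Pi_\pm(\mcal{X}|\mbb{X})$, which requires carefully applying Lemma~\ref{lmm:commutativity} to the restriction of $N$ to $\spn\mcal{X}$; everything else is bookkeeping around the fact that $N$ and $L$ differ only by the injective intertwiner $L_0$, a gauge-type symmetry that leaves $\ddi_\pm$ invariant.
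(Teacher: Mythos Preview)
Your proposal is correct and follows essentially the same route as the paper: the paper's entire proof is the one-line remark that the statement ``directly follows from the application of Lemma~\ref{lmm:commutativity},'' and you have simply unpacked that by passing through Theorem~\ref{thm:inference} (itself an immediate corollary of Lemma~\ref{lmm:commutativity}) and then invoking Lemma~\ref{lmm:commutativity} again for the identification $\Pi_\pm(N\mcal{X}\,|\,\mbb{X})=\Pi_\pm(\mcal{X}\,|\,\mbb{X})$. Your reading of the undefined symbols $\mcal{X}_0,\mcal{X}_1$ as $\mbb{X},\mcal{X}$ is the natural one, and the only place where you supply more than the paper is the justification that $\dim\spn N\mcal{X}=m$ and $L^+L=N^+N$, which the paper leaves implicit.
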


\begin{proof}
  The  statement directly  follows from  the application  of
  Lemma~\ref{lmm:commutativity}.
\end{proof}

\section{(Hyper)-spherical case}\label{app:spherical}

For any $\mbf{v} \in \R^\ell$, upon defining
\begin{align*}
  g\left( \mbf{v} \right) = \left| \mbf{v} \right|_2 -
  \sqrt{2} \hat{\mbf{u}}_\ell \cdot \mbf{v},
\end{align*}
one   has    that   the   (hyper)-spherical    state   space
$\mbb{X}_-^\ell$  and   the  (hyper)-conical   effect  space
$\mbb{X}_+^\ell$ are given by
\begin{align}
  \label{eq:sphere}
  \mbb{X}_-^\ell  &  := \left\{  \mbf{v}  \;  \Big| \;  g\left(
  \mbf{v}  \right)   \le  0,  \;   \mbf{u}_\ell  \cdot
  \mbf{v} = 1 \right\},\\
  \label{eq:cone}
  \mbb{X}_+^\ell  &   :=  \left\{  \mbf{v}  \;   \Big|  \;
  g\left(\mbf{v}\right) \le 0, \; g\left( \mbf{u}_\ell
  - \mbf{v}\right) \le 0 \right\}.
\end{align}

\begin{cor}[Data-driven inference]
  \label{cor:inference}
  For any $L \in \mcal{M}_\pm^{\ell \to m}$ such that $\supp
  L = \spn \mcal{X}$, one has
  \begin{align*}
    \ddi_\pm \left( \mcal{X} | \mbb{X}^\ell_\pm \right) & = L^+
    \ddi_\pm \left( L \mcal{X}| \mbb{X}^m_\pm \right),
  \end{align*}
  where $m := \dim \supp L$, for any $m$ in the $-$ case and
  for $m = \ell$ in the $+$ case.
\end{cor}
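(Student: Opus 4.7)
The plan is to derive the corollary from Theorem~\ref{thm:inference}, exploiting the (hyper)-spherical symmetry of $\mbb{X}_-^\ell$ in the $-$ case and reducing to Lemma~\ref{lmm:commutativity} in the $+$ case.

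For the $+$ case, by hypothesis $m = \ell$, so $L \in \mcal{M}_+^{\ell \to \ell}$ with $\supp L = \R^\ell$ is invertible and $L^+ = L^{-1}$. The condition $L^+L\mcal{X} = \mcal{X}$ then holds trivially, and Lemma~\ref{lmm:commutativity} yields directly $\ddi_+(\mcal{X}|\mbb{X}_+^\ell) = L^+ \ddi_+(L\mcal{X}|\mbb{X}_+^\ell)$, matching the claim since here $\mbb{X}_+^m = \mbb{X}_+^\ell$.

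For the $-$ case with general $m \le \ell$, I would first apply Theorem~\ref{thm:inference} to write
\begin{align*}
\ddi_-(\mcal{X}|\mbb{X}_-^\ell) = L^+ \bigcup_{M \in \mcal{M}} \ddi_-(L\mcal{X}|M\mbb{X}_-^\ell),
\end{align*}
where $\mcal{M}$ has $\supp\mcal{M} = \Pi_-(\mcal{X}|\mbb{X}_-^\ell)$. The heart of the argument is to show that each summand in the union coincides with $\ddi_-(L\mcal{X}|\mbb{X}_-^m)$. Any $M \in \mcal{M}_-^{\ell \to m}$ satisfies $M^T \mathbf{u}_m = \mathbf{u}_\ell$, so $\mathbf{u}_\ell \in \supp M$; combining this with the orthogonal invariance of $\mbb{X}_-^\ell$ about $\mathbf{u}_\ell$ and the explicit form of Eq.~\eqref{eq:sphere}, the image $M\mbb{X}_-^\ell$ can be identified with $T\mbb{X}_-^m$ for an invertible $T: \R^m \to \R^m$ satisfying $T^T\mathbf{u}_m = \mathbf{u}_m$. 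The substitution $M' \mapsto M'T^{-1}$ in the optimization defining $\ddi_-(L\mcal{X}|M\mbb{X}_-^\ell)$ then gives a bijection onto the optimization defining $\ddi_-(L\mcal{X}|\mbb{X}_-^m)$, preserving both the volume functional and the $\mcal{M}_-$-constraint. Hence the union collapses to the single set $\ddi_-(L\mcal{X}|\mbb{X}_-^m)$ and the claim follows.

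The main obstacle I foresee lies in the explicit identification $M\mbb{X}_-^\ell = T\mbb{X}_-^m$ with $T$ in the $\mathbf{u}_m$-stabilizer: this requires carefully unpacking Eq.~\eqref{eq:sphere} within $\supp M$, showing that the restriction $M|_{\supp M}$ intertwines the two sphere structures up to such a $T$, and verifying that the resulting reparametrisation of the argmin respects all the inclusion and typing constraints encoded in $\mcal{L}_-$. Once this identification is made, the collapse of the union and the resulting formula are essentially formal.
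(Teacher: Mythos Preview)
Your proposal is correct and follows essentially the same route as the paper. The paper's proof is a single sentence invoking Theorem~\ref{thm:inference} together with Eqs.~\eqref{eq:sphere} and~\eqref{eq:cone}; you unpack exactly the content behind that sentence---in the $-$ case, the spherical form~\eqref{eq:sphere} is what guarantees that every $M\mbb{X}_-^\ell$ with $M\in\mcal{M}_-^{\ell\to m}$ of rank $m$ is a gauge image $T\mbb{X}_-^m$, collapsing the union in Theorem~\ref{thm:inference}, while in the $+$ case the restriction $m=\ell$ makes Theorem~\ref{thm:inference} degenerate to Lemma~\ref{lmm:commutativity}, so your shortcut there is harmless.
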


\begin{proof}
  The       statement       directly      follows       from
  Theorem~\ref{thm:inference}   and   Eqs.~\eqref{eq:sphere}
  and~\eqref{eq:cone}.
\end{proof}

\begin{cor}[Observational completeness]
  \label{cor:ocness}
  Any $\mcal{X} \subseteq \mbb{X}_\pm^\ell$ is OC for $N \in
  \mcal{M}_\pm^{\ell \to n}$ if and  only if there exists $L
  \in \mcal{M}_\pm^{\ell \to  m}$ with $L^+ L =  N^+ N$ such
  that $L \mbb{X}_\pm^\ell = \mbb{X}_\pm^m$ such that
  \begin{align*}
    \ddi_\pm   (  L   \mcal{X}|  \mbb{X}^m_\pm   )  =   \left\{
    \mbb{X}^m_\pm \right\},
  \end{align*}
  where $m := \dim \supp L$.
\end{cor}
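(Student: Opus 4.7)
The plan is to derive Corollary \ref{cor:ocness} as a direct specialization of Theorem \ref{thm:ocness} to the (hyper)-spherical/-conical setting, using the rotational symmetry of $\mbb{X}_\pm^\ell$ to reduce the union appearing in the general statement to a single, canonical inference problem.

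First, I apply Theorem \ref{thm:ocness} with the prior set to $\mbb{X} = \mbb{X}_\pm^\ell$. This immediately yields the equivalent characterization
\begin{align*}
  \bigcup_{M \in \mcal{M}} \ddi_\pm(L\mcal{X}|M\mbb{X}_\pm^\ell) = \left\{ L \mbb{X}_\pm^\ell \right\},
\end{align*}
for some $L \in \mcal{M}_\pm^{\ell \to m}$ with $L^+ L = N^+ N$ and some $\mcal{M} \subseteq \mcal{M}_\pm^{\ell \to m}$ with $\supp \mcal{M} = \Pi_\pm(\mcal{X}|\mbb{X}_\pm^\ell)$. The task is then to show that, in the (hyper)-spherical/-conical case, this condition collapses to the simpler $\ddi_\pm(L\mcal{X}|\mbb{X}_\pm^m) = \{\mbb{X}_\pm^m\}$ together with $L \mbb{X}_\pm^\ell = \mbb{X}_\pm^m$.

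Second, I invoke the rotational symmetry of $\mbb{X}_\pm^\ell$. By Eqs.~\eqref{eq:sphere} and~\eqref{eq:cone}, $\mbb{X}_-^\ell$ is a (hyper)-sphere and $\mbb{X}_+^\ell$ a (hyper)-cone, both invariant under the orthogonal transformations fixing $\mbf{u}_\ell$. The key structural fact is that for any $M \in \mcal{M}_\pm^{\ell \to m}$ of the required form, the image $M \mbb{X}_\pm^\ell$ is isometric, inside $\spn M$, to the canonical lower-dimensional body $\mbb{X}_\pm^m$. This is the content of Corollary \ref{cor:inference}, which I apply to every term of the union: each $\ddi_\pm(L\mcal{X}|M\mbb{X}_\pm^\ell)$ can be replaced by the canonical $\ddi_\pm(L\mcal{X}|\mbb{X}_\pm^m)$, as the Euclidean volume entering the definition of $\ddi_\pm$ is unaffected by the isometry.

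Third, once every term in the union is replaced by the same canonical inference problem, the union collapses to a single element, and the general condition of Theorem \ref{thm:ocness} becomes exactly the condition of Corollary \ref{cor:ocness}, with the additional requirement $L \mbb{X}_\pm^\ell = \mbb{X}_\pm^m$ capturing precisely the symmetry-preserving choice of $L$ that makes the collapse valid.

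The main obstacle is the asymmetry between the two cases highlighted in Corollary \ref{cor:inference} and in the discussion surrounding Fig.~\ref{fig:supports}: for the $-$ (spherical) case the reduction works for every $m \le \ell$, whereas for the $+$ (conical) case it is available only when $m = \ell$, since projecting a cone onto a proper subspace that does not contain $\mbf{u}_\ell$ generically breaks the conical structure. The condition $L \mbb{X}_\pm^\ell = \mbb{X}_\pm^m$ stated in the corollary implicitly enforces this restriction, but verifying carefully that every admissible $L$ arising from the general theorem indeed satisfies this symmetry-preserving condition, and that the symmetry-preserving choices exhaust $\Pi_\pm(\mcal{X}|\mbb{X}_\pm^\ell)$ up to gauge, is the technical crux of the argument.
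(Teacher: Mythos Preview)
Your proposal is correct and follows essentially the same approach as the paper: the paper's proof is the one-liner ``directly follows from Theorem~\ref{thm:ocness} and Eqs.~\eqref{eq:sphere} and~\eqref{eq:cone},'' and what you have written is a careful unpacking of exactly that line---apply the general OC theorem, then use the rotational symmetry of the (hyper)-sphere/-cone to collapse the union over $\mcal{M}$ to a single canonical problem. Your invocation of Corollary~\ref{cor:inference} in step two is a slight detour (the paper appeals directly to Eqs.~\eqref{eq:sphere}--\eqref{eq:cone} rather than to the already-packaged corollary), but since Corollary~\ref{cor:inference} is itself derived from those same equations there is no circularity and no substantive difference; your final paragraph correctly flags the $+$/$-$ asymmetry that the paper's terse proof leaves implicit.
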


\begin{proof}
  The       statement       directly      follows       from
  Theorem~\ref{thm:ocness}     and    Eqs.~\eqref{eq:sphere}
  and~\eqref{eq:cone}.
\end{proof}

A set  $\mcal{X}$ is $\mbf{u}_\ell/2$-symmetric if  and only
if for  any $\mbf{v} \in  \mcal{X}$ one has  $\mbf{u}_\ell -
\mbf{v} \in \mcal{X}$.  Clearly  the set $\mbb{X}_+^\ell$ is
$\mbf{u}_\ell/2$-symmetric.

\begin{lmm}
  For any  invertible $M  \in \mcal{M}_\pm^{\ell  \to \ell}$
  and    any    $\mcal{X}     \in    \mbb{X}_-$    or    any
  $\mbf{u}_\ell/2$-symmetric $\mcal{X}  \in \mbb{X}_+^\ell$,
  the following are equivalent conditions:
  \begin{enumerate}
    \item\label{item:consistency0}  $\mcal{X}   \subseteq  M
      \mbb{X}_\pm^\ell$,
    \item\label{item:consistency1}  $g(M^{-1}
      \mbf{v}) \le 0$, for any $\mbf{v} \in \mcal{X}$.
  \end{enumerate}
\end{lmm}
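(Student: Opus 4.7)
The plan is to argue both directions separately, with the direction $(\ref{item:consistency0}) \Rightarrow (\ref{item:consistency1})$ being essentially immediate and the reverse direction splitting naturally into the $-$ and $+$ cases, each exploiting a different constraint on $M$.

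For $(\ref{item:consistency0}) \Rightarrow (\ref{item:consistency1})$: if $\mbf{v} \in \mcal{X} \subseteq M\mbb{X}_\pm^\ell$, then $\mbf{w} := M^{-1}\mbf{v} \in \mbb{X}_\pm^\ell$ by invertibility of $M$, and in either of the two definitions~\eqref{eq:sphere} and~\eqref{eq:cone} membership in $\mbb{X}_\pm^\ell$ forces $g(\mbf{w}) \le 0$, which is precisely condition~(\ref{item:consistency1}).

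For $(\ref{item:consistency1}) \Rightarrow (\ref{item:consistency0})$: I need to verify $M^{-1}\mbf{v} \in \mbb{X}_\pm^\ell$ for each $\mbf{v} \in \mcal{X}$, given $g(M^{-1}\mbf{v}) \le 0$. In the $-$ case, the remaining condition in~\eqref{eq:sphere} is the affine constraint $\mbf{u}_\ell \cdot M^{-1}\mbf{v} = 1$; since $\mcal{X} \subseteq \mbb{X}_-^\ell$ gives $\mbf{u}_\ell \cdot \mbf{v} = 1$, and since $M \in \mcal{M}_-^{\ell \to \ell}$ means $M^T\mbf{u}_\ell = \mbf{u}_\ell$ and hence $M^{-T}\mbf{u}_\ell = \mbf{u}_\ell$, I get $\mbf{u}_\ell \cdot M^{-1}\mbf{v} = (M^{-T}\mbf{u}_\ell) \cdot \mbf{v} = \mbf{u}_\ell \cdot \mbf{v} = 1$, so $M^{-1}\mbf{v} \in \mbb{X}_-^\ell$. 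In the $+$ case, the remaining condition in~\eqref{eq:cone} is $g(\mbf{u}_\ell - M^{-1}\mbf{v}) \le 0$. Here I use $M \in \mcal{M}_+^{\ell \to \ell}$, i.e.\ $M\mbf{u}_\ell = \mbf{u}_\ell$, hence $M^{-1}\mbf{u}_\ell = \mbf{u}_\ell$, to rewrite $\mbf{u}_\ell - M^{-1}\mbf{v} = M^{-1}(\mbf{u}_\ell - \mbf{v})$; then the assumed $\mbf{u}_\ell/2$-symmetry of $\mcal{X}$ gives $\mbf{u}_\ell - \mbf{v} \in \mcal{X}$, and condition~(\ref{item:consistency1}) applied to this element yields $g(M^{-1}(\mbf{u}_\ell - \mbf{v})) \le 0$, completing the membership check.

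The only subtle point, and the one the hypotheses are tailored to, is the $+$ case: without the $\mbf{u}_\ell/2$-symmetry of $\mcal{X}$, the second defining inequality of the cone $\mbb{X}_+^\ell$ cannot be recovered from condition~(\ref{item:consistency1}), which only constrains $g$ evaluated at preimages of points of $\mcal{X}$. The symmetry hypothesis is exactly what lets one feed $\mbf{u}_\ell - \mbf{v}$ back into~(\ref{item:consistency1}). I do not anticipate any real obstacle beyond bookkeeping the two constraints $M\mbf{u}_\ell = \mbf{u}_\ell$ (respectively $M^T\mbf{u}_\ell = \mbf{u}_\ell$) and combining them with the two pieces of the definition of $\mbb{X}_\pm^\ell$.
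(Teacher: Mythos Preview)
Your proof is correct and follows essentially the same route as the paper's: the forward direction is immediate from the definitions, and the reverse direction splits into the $-$ case (using $M^{-T}\mbf{u}_\ell=\mbf{u}_\ell$ to recover the affine constraint) and the $+$ case (using the $\mbf{u}_\ell/2$-symmetry together with $M^{-1}\mbf{u}_\ell=\mbf{u}_\ell$ to recover the second cone inequality). If anything, you are slightly more explicit than the paper in the $+$ case, spelling out the identity $\mbf{u}_\ell - M^{-1}\mbf{v} = M^{-1}(\mbf{u}_\ell - \mbf{v})$ that the paper uses implicitly.
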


\begin{proof}
  Due   to   the  invertibility   of   $M$   one  has   that
  condition~\eqref{item:consistency0}   is   equivalent   to
  $M^{-1}    \mcal{X}    \in   \mbb{X}_\pm^\ell$.     Hence,
  implication      $\ref{item:consistency0}      \Rightarrow
  \ref{item:consistency1}$    follows    immediately    from
  Eqs.~\eqref{eq:sphere}   and~\eqref{eq:cone}.   To   prove
  implication      $\ref{item:consistency1}      \Rightarrow
  \ref{item:consistency0}$,  we  need   to  distinguish  two
  cases.

  Let us first consider the case $\mbb{X}_-^\ell$.  Since by
  hypothesis   $M   \in    \mcal{M}_-^\ell$,   by   explicit
  computation  one has  $M^{-1} \in  \mcal{M}_-^\ell$. Hence
  for    any    $\mbf{v}    \in    \mcal{X}_-$    one    has
  $\hat{\mbf{u}}_\ell \cdot M^{-1} \mbf{v}  = 1$.  Hence the
  implication remains proved.

  Let us  now consider  the case $\mbb{X}_+^\ell$.   For any
  $\mbf{v}  \in \mcal{X}$  by hypothesis  one has  $g(M^{-1}
  \mbf{v}) \le 0$.  Due  to the $\mbf{u}_\ell/2$-symmetry of
  $\mcal{X}$,  also $\mbf{u}_\ell  - \mbf{v}  \in \mcal{X}$,
  from  which   by  hypothesis  $g(M^{-1}   (\mbf{u}_\ell  -
  \mbf{v}))  \le 0$.   Since $g(M^{-1}  \mbf{v}) \le  0$ and
  $g(M^{-1}   (\mbf{u}_\ell   -   \mbf{v}))   \le   0$,   by
  Eq.~\eqref{eq:cone}   one   has    $M^{-1}   \mbf{v}   \in
  \mbb{X}_+^\ell$.  Hence, the implication remains proved.
\end{proof}

From    Eq.~\eqref{eq:volume}     one    has     $\vol    (M
\mbb{X}_\pm^\ell) \propto \sqrt{f(M)}$, where
\begin{align*}
  f \left( M \right) := \log \left| M^T M \right|.
\end{align*}

The  constraint  $M  \in \mcal{M}_\pm^{\ell  \to  \ell}$  in
$\ddi(L  \mcal{X} |  \mbb{X}_\pm^\ell)$ can  be implemented  by
introducing the auxiliary functions:
\begin{align*}
  h^\pm   \left(   N  \right)   =   \Pi^\mp   N  \Pi^\pm   +
  \hat{\mbf{u}}_\ell^{\otimes 2}, \qquad \Pi^\pm := \openone
  - \frac{1\pm1}2 \hat{\mbf{u}}_\ell^{\otimes 2}.
\end{align*}

By direct inspection $f(h^\pm(N))$ and $g(h^\pm(N) \mbf{v})$
are convex functions of $N \in \R^{n \to n}$.  Hence for any
$\mcal{X} \subseteq \R^\ell$ one has that $\ddi_\pm(\mcal{X}
| \mbb{X}_\pm^\ell)$ is  a convex programming  problem, that
can be  efficiently solved in  $N$.  To this aim,  one needs
the  Jacobian   and  Hessian   matrices  (with   respect  to
$\operatorname{vec}(N)$) of $f$ and $g$. From the chain rule
it immediately  follows that  for any function  $g: \R^{\ell
  \to \ell} \to \R$ one has
\begin{align}
  \label{eq:chain0} \jac   g   \circ   h^\pm   (N)  &   =   \left.    \jac   g
  \right|_{h^\pm\left(N\right)}        \Pi^\pm       \otimes
  \Pi^\mp,\\ \label{eq:chain1}  \hes g \circ h^\pm  \left( N
  \right)  &  =  \Pi^\pm  \otimes  \Pi^\mp  \left.   \hes  g
  \right|_{h^\pm\left(N\right)} \Pi^\pm \otimes \Pi^\mp.
\end{align}
By explicit computation one has
\begin{align*}
    \jac   f\left(M^{-1}\right)   &   =   -2   M^T,\\   \hes
    f\left(M^{-1}\right) & = 2 M^T \otimes M S_{\ell^2},
\end{align*}
where  $S_{\ell^2}$  denotes the  $\ell^2$-dimensional  swap
operator, and
\begin{align*}
  \jac   g  \left(M^{-1}   \mbf{v}   \right)   &  =   \left(
  \frac{M^{-1} \mbf{v}}{ \left|  M^{-1} \mbf{v} \right|_2} -
  \sqrt{2}      \hat{\mbf{u}}_\ell      \right)      \otimes
  \mbf{v},\\ \hes g\left( M^{-1}  \mbf{v} \right) & = \left|
  M^{-1} \mbf{v}\right|_2^{-1} \left(  \openone_{\ell^2} - 2
  \frac{\left(M^{-1}    \mbf{v}\right)^{\otimes   2}}{\left|
    M^{-1}     \mbf{v}     \right|_2}    \right)     \otimes
  \mbf{v}^{\otimes 2}.
\end{align*}

\begin{thm}
  If  a   set  $\mcal{X}  \subseteq  \mbb{X}_-^\ell$   or  a
  $\mbf{u}_\ell/2$-symmetric    set   $\mcal{X}    \subseteq
  \mbb{X}_+^\ell$ is OC for a given $M \in \R^{\ell \to n}$,
  then $L \mcal{X}$ supports a spherical $2$-design, for any
  $L \in \mcal{L}_\pm^{\ell \to m}$ with $m := \dim \supp M$
  such  that $M^+  M  =  L^+ L$  and  $L \mbb{X}_\pm^\ell  =
  \mbb{X}_\pm^m$.
\end{thm}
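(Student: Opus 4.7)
The plan is to recast the observational completeness hypothesis as the uniqueness of a minimizer of a convex program, apply the KKT conditions at that minimizer, and identify the resulting stationarity equation with the defining property of a spherical $2$-design.

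First, by Corollary~\ref{cor:ocness} it suffices to work inside $\R^m$: the OC hypothesis is equivalent to the statement $\ddi_\pm(L\mcal{X} | \mbb{X}_\pm^m) = \{\mbb{X}_\pm^m\}$, i.e.\ the only $M \in \mcal{M}_\pm^{m \to m}$ minimizing $\vol(M \mbb{X}_\pm^m)$ subject to $L\mcal{X} \subseteq M \mbb{X}_\pm^m$ is (up to gauge) the identity. Combined with the lemma just established, which rewrites the feasibility constraints as $g(M^{-1}\mbf{v}_k) \le 0$ for every $\mbf{v}_k \in L\mcal{X}$ (the $\mbf{u}_\ell/2$-symmetry being crucial in the $+$ case), the problem reduces to the following: if $M = \mathbb{1}$ uniquely minimizes $f(M) = \log|M^T M|$ among $M \in \mcal{M}_\pm^{m \to m}$ satisfying $g(M^{-1}\mbf{v}_k) \le 0$ for every $\mbf{v}_k \in L\mcal{X}$, then $L\mcal{X}$ supports a spherical $2$-design.

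Next, I would implement the linear constraint $M \in \mcal{M}_\pm^{m \to m}$ through the auxiliary parametrization $h^\pm$; as noted after Eq.~\eqref{eq:chain1}, $f \circ h^\pm$ and each $g \circ h^\pm(\cdot)\mbf{v}_k$ are convex in the underlying variable $N$, so the KKT conditions are necessary and sufficient. Stationarity at $M = \mathbb{1}$ furnishes non-negative multipliers $\{\lambda_k\}$, supported on the active set $A = \{k : g(\mbf{v}_k) = 0\}$, for which $\jac f = \sum_{k \in A} \lambda_k \, \jac [g(M^{-1}\mbf{v}_k)]$, each Jacobian pre- and post-multiplied by the projectors $\Pi^\pm$ coming from Eq.~\eqref{eq:chain0}. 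Substituting the explicit formulas $\jac f(M^{-1})|_{\mathbb{1}} = -2\,\mathbb{1}$ and $\jac g(M^{-1}\mbf{v})|_{\mathbb{1}} = (\hat{\mbf{v}} - \sqrt{2}\hat{\mbf{u}}_m) \otimes \mbf{v}$, and tracing out the $\hat{\mbf{u}}_m$-components pinned by $\Pi^\pm$, the stationarity equation collapses on the relevant hyperplane to
\begin{align*}
    \Pi_m^\perp \propto \sum_{k \in A} \lambda_k \, \mbf{w}_k \otimes \mbf{w}_k,
\end{align*}
where $\Pi_m^\perp$ is the orthogonal projector onto $\mbf{u}_m^\perp$ and $\mbf{w}_k$ is the component of $\mbf{v}_k$ orthogonal to $\mbf{u}_m$ (shifted by $\mbf{u}_m/2$ in the $+$ case). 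Rescaling $\{\lambda_k\}$ to a probability distribution $\{p_k\}$ yields the defining identity of a spherical $2$-design on the boundary elements of $L\mcal{X}$.

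The main obstacle, I expect, will be a careful treatment of the $+$ (conical) case: the feasibility constraint involves both $g(M^{-1}\mbf{v})$ and $g(M^{-1}(\mbf{u}_m - \mbf{v}))$, and the $\mbf{u}_\ell/2$-symmetry of $\mcal{X}$ must be invoked to ensure that whenever $\mbf{v}_k$ activates the former, $\mbf{u}_m - \mbf{v}_k$ activates the latter, so that summing the KKT contributions over symmetric pairs cancels the terms odd in $\mbf{w}_k$ and leaves the clean quadratic form above. A secondary technicality is verifying that the multipliers can be rescaled to a genuine probability distribution (rather than vanishing identically), which follows from the uniqueness of $\mathbb{1}$ as minimizer and the non-emptiness of the active set $A$.
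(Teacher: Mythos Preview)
Your proposal is correct and follows essentially the same route as the paper: reduce via Corollary~\ref{cor:ocness} to the statement $\ddi_\pm(L\mcal{X}\mid\mbb{X}_\pm^m)=\{\mbb{X}_\pm^m\}$, invoke first-order optimality at $M=\openone_m$, plug in the explicit Jacobians, and read off the $2$-design identity after projecting with $\Pi^\pm$. The only cosmetic difference is that you phrase the optimality step in KKT language, whereas the paper appeals to John's necessary condition (Lemma~\ref{lmm:john}); these are the same statement here, and in fact only the \emph{necessary} direction is used, so your remark about sufficiency via convexity is not needed. Your extra discussion of the $+$ case (pairing $\mbf{v}_k$ with $\mbf{u}_m-\mbf{v}_k$ to cancel odd terms) and of why the multipliers cannot all vanish makes explicit what the paper's proof absorbs into the projectors of Eqs.~\eqref{eq:chain0}--\eqref{eq:chain1} and the nonvanishing of $\jac f|_{\openone_m}=-2\openone_m$.
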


\begin{proof}
  Due to Corollary~\ref{cor:ocness}  one has that $\mcal{X}$
  is  OC for  $M$  if  and only  if  $\ddi_\pm(L \mcal{X}  |
  \mbb{X}_\pm^m)   =   \{    \mbb{X}_\pm^m   \}$.   Due   to
  Lemma~\ref{lmm:john}  there exists  $\{  \lambda_k \ge  0,
  \mbf{v}_k    \in    L    \mathcal{X}   \}$    such    that
  Eq.~\ref{eq:john} holds when computed in $\openone_m$.  By
  explicit computation one has
  \begin{align*}
    \left.   \jac  f  \left(  M^{-1}  \right)  \right|_{M  =
      \openone_m} = -2 \openone_m,
  \end{align*}
  and
  \begin{align*}
    \left|  \mbf{v}  \right|_2^{-1}  \left.  \jac  g  \left(
    M^{-1}  \mbf{v}  \right)   \right|_{M  =  \openone_m}  =
    \frac12  \tilde{\mbf{v}}^{\otimes  2} -  \hat{\mbf{u}}_m
    \otimes \tilde{\mbf{v}},
  \end{align*}
  where     $\tilde{\mbf{v}}_k     =    (\mbf{u}_m     \cdot
  \mbf{v}_k)^{-1} \mbf{v}_k$.  By defining $p_k := (4 \ell -
  4)^{-1} |\mbf{v}_k|_2 \lambda_k$ by Eqs.~\eqref{eq:chain0}
  and~\eqref{eq:chain1}    one    has    that    $\{    p_k,
  \tilde{\mbf{v}}_k \}$ is a spherical $2$-design, hence the
  implication follows.
\end{proof}

\begin{thm}
  If a set $\mcal{X}  \subseteq \mbb{X}_-^\ell$ is such that
  $L \mcal{X}$ supports a  spherical $2$-design, for some $L
  \in    \mcal{L}_\pm^{\ell   \to    m}$   such    that   $L
  \mbb{X}_\pm^\ell  =  \mbb{X}_\pm^m$  for  some  $m$,  then
  $\mcal{X}$ is OC for any $M \in \R^{\ell \to n}$ such that
  $m := \dim \supp M$ such that $M^+ M = L^+ L$.
\end{thm}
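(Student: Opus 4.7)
The plan is to invoke Corollary~\ref{cor:ocness}, which reduces the claim to showing $\ddi_-(L\mcal{X}|\mbb{X}_-^m)=\{\mbb{X}_-^m\}$ for the particular $L$ delivered by the hypothesis (the one that flattens $\mbb{X}_-^\ell$ onto $\mbb{X}_-^m$ and satisfies $L^+L = M^+M$). Once this reduction is made, my task is to solve the convex programme on $m\times m$ matrices and exhibit $N=\openone_m$ as its unique optimum, up to sphere-preserving gauge symmetries. Primal feasibility of $\openone_m$ is immediate: since $L\mcal{X}\subseteq L\mbb{X}_-^\ell=\mbb{X}_-^m=\openone_m\mbb{X}_-^m$, the constraint $\mcal{X}\subseteq M\mbb{X}_-^m$ holds at $N=\openone_m$; the affine constraint $\openone_m\in\mcal{M}_-^{m\to m}$ is tautological, and the objective value there is $0$.

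Since the main text already established that the programme, in the $h^-$-parametrization, is convex, optimality at $N=\openone_m$ is equivalent to a Karush-Kuhn-Tucker stationarity condition: there exist nonnegative multipliers $\{\lambda_k\}$ indexed by the active constraints $\mbf{v}_k\in L\mcal{X}$ with
\begin{align*}
  \jac(f\circ h^-)(N)\big|_{N=\openone_m}
  = \sum_k \lambda_k\, \jac(g\circ h^-)(N^{-1}\mbf{v}_k)\big|_{N=\openone_m}.
\end{align*}
All the constraints are active at $\openone_m$ because the $2$-design support $\{\tilde{\mbf{v}}_k\}$ lies on the boundary of the sphere, i.e.\ $g(\mbf{v}_k)=0$. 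Plugging in the explicit Jacobians displayed right above the previous theorem and the $\Pi^-\otimes\Pi^+$ factor coming from Eq.~\eqref{eq:chain0}, the dipole term $\hat{\mbf{u}}_m\otimes\tilde{\mbf{v}}_k$ is annihilated, leaving only the quadratic moment $\tilde{\mbf{v}}_k^{\otimes 2}$ and the identity $\openone_m$.

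Choosing the multipliers $\lambda_k := (4m-4)\, p_k/|\mbf{v}_k|_2$ with $\{p_k\}$ the $2$-design weights of $\{\tilde{\mbf{v}}_k\}$ (exactly the inverse of the change of variables used in the forward direction), the stationarity identity collapses, on the hyperplane orthogonal to $\hat{\mbf{u}}_m$, to $\sum_k p_k\,\tilde{\mbf{v}}_k^{\otimes 2}\propto\openone_m$, which is the defining moment condition of a spherical $2$-design. The design hypothesis therefore furnishes a valid KKT certificate, and by convexity $N=\openone_m$ is a global minimizer. For uniqueness I appeal to strict convexity of $-\log\abs{N^T N}$ on the positive-definite cone: any competing minimizer $N^\star$ must satisfy $N^\star\mbb{X}_-^m=\mbb{X}_-^m$, hence be a sphere-preserving gauge symmetry, yielding the same image $\mbb{X}_-^m$. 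This gives $\ddi_-(L\mcal{X}|\mbb{X}_-^m)=\{\mbb{X}_-^m\}$, and Corollary~\ref{cor:ocness} concludes.

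The main obstacle I foresee is the careful bookkeeping of the projectors $\Pi^\pm$ inherited from the $h^-$-parametrization: the ambient objective and constraints live on all of $\R^{m\times m}$, while the affine requirement $N\in\mcal{M}_-^{m\to m}$ restricts admissible tangent directions to those orthogonal to $\hat{\mbf{u}}_m$ in the appropriate tensor slot. It is precisely the cancellation of the cross term $\hat{\mbf{u}}_m\otimes\tilde{\mbf{v}}_k$ under $\Pi^-\otimes\Pi^+$ that turns KKT stationarity into the pure second-moment design condition, making the $2$-design property both necessary and sufficient. Verifying this cancellation scrupulously --- and ruling out spurious minimizers outside the gauge orbit of $\openone_m$ via strict convexity of $-\log\det$ --- is where the bulk of the technical work will lie.
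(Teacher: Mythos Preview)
Your proposal is viable and the reduction via Corollary~\ref{cor:ocness} is exactly how the paper begins, but from that point on the two arguments diverge.

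You run the KKT machinery in reverse: take the $2$-design weights as Lagrange multipliers at $N=\openone_m$, verify that stationarity collapses (after the $\Pi^-\otimes\Pi^+$ projection) to the second-moment identity~\eqref{eq:sph2des}, and invoke convexity of the programme to conclude global optimality. This is the mirror image of the necessary-condition proof just above, and it has the conceptual merit of exhibiting the $2$-design equation as \emph{literally} the KKT stationarity condition, so that necessity and sufficiency become the two directions of one identity.

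The paper does not touch KKT here. Instead it gives a direct primal bound: for any feasible $M$ one has $0\ge\sum_k p_k\, g(M^{-1}\mbf{v}_k)$, and the $2$-design identity~\eqref{eq:sph2des} lets one evaluate this sum in closed form. After using $M\in\mcal{M}_-$ (so that $(M^{-1})^T\hat{\mbf{u}}=\hat{\mbf{u}}$) and the elementary bound $|M^{-1}\hat{\mbf{u}}|_2^2\ge1$, the inequality becomes $\Tr[M^{-1}(M^{-1})^T]\le m$, and the scalar trace--log inequality $\Tr[X-\openone]\ge\log|X|$ then yields $f(M)\ge0=f(\openone_m)$ with equality iff $M$ is orthogonal. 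Uniqueness thus falls out of the equality case of a single scalar inequality, with no appeal to strict convexity of the programme.

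One caution on your uniqueness step: $N\mapsto-\log|N^TN|$ is \emph{not} strictly convex in $N$ (left-multiplication by any orthogonal matrix leaves it fixed), so the sentence as written does not quite close the argument. What does work is the observation that, under the affine constraint $(M^{-1})^T\hat{\mbf{u}}_m=\hat{\mbf{u}}_m$, both the objective and each inequality $g(M^{-1}\mbf{v}_k)\le0$ depend on $M$ only through the positive-definite matrix $Q:=(MM^T)^{-1}$; the constraints become linear in $Q$ and the objective becomes $-\log|Q|$, which \emph{is} strictly convex on the positive-definite cone, pinning down $Q=\openone_m$ and hence $M\mbb{X}_-^m=\mbb{X}_-^m$. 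The paper's trace--log route sidesteps this reparametrization entirely.
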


\begin{proof}
  Due to Corollary~\ref{cor:ocness}  one has that $\mcal{X}$
  is  OC for  $M$  if  and only  if  $\ddi_\pm(L \mcal{X}  |
  \mbb{X}_-^m) =  \{ \mbb{X}_-^m  \}$. By  hypothesis, there
  exixsts a  probability distribution $\{ p_k  \}$ such that
  $\{ p_k,  \mbf{v}_k \in  \mcal{X} \}$  is a  spherical $2$
  design. Hence,  for any linear  map $M$ such  that $M^{-1}
  \mcal{X} \subseteq \mbb{X}_-$ one has
  \begin{align*}
    0 \ge \sum_k p_k g \left( M^{-1} v_k \right).
  \end{align*}
  By using Eq.~\eqref{eq:sph2des} one immediately has
  \begin{align*}
    & \sum_k p_k  g \left( M^{-1} v_k \right) \\  = & \frac{
      \Tr\left[ M^{-1} {M^{-1}}^T \right] + \left( \ell -2 \right) \left( \left| M^{-1} \hat{u}
      \right|_2^2 -2 \right)     -2    \left|     {M^{-1}}^T    \hat{u}
      \right|_2^2}{|\mbf{u}_\ell|_2^2      \left(\ell     -1
      \right)}.
  \end{align*}
  By using  Eq.\eqref{eq:measurements} and the fact  that $|
  M^{-1} \hat{u} |_2^2 \ge 1$ one has
  \begin{align*}
    \sum_k   p_k   g   \left(   M^{-1}   v_k   \right)   \ge
    \frac{\Tr\left[    M^{-1}     {M^{-1}}^T    \right]    -
      \ell}{|\mbf{u}_\ell|_2^2 \left(\ell -1 \right)}.
  \end{align*}
  Since for  any $X \ge  0$ one  has $\Tr[X -  \openone] \ge
  \log | X  |$ with equality if and only  if $X = \openone$,
  one has $\log |M^{-1} {M^{-1}}^T| \le 0$, with equality if
  and  only if  $M$  is an  orthogonal  matrix.  Hence,  the
  statement remains proved.
\end{proof}

\section{Spherical $t$-designs}\label{app:design}

\begin{dfn}[Spherical $t$-design]
  A  probability distribution  $\{ p_k  \}$ over  states $\{
  \mbf{v}_k \in \R^\ell \}$, that  is $\{ p_k, \mbf{v}_k \}$
  such that $\mbf{v}_k \cdot \mbf{u}_\ell  = 1$ for any $k$,
  is a spherical $t$-design if and only if
  \begin{align*}
    \sum_k p_k \mbf{v}_k^{\otimes t} = \int \dif O \left(
    O \mbf{v} \right)^{\otimes t},
  \end{align*}
  where $\dif O$ denotes the  Haar measure of the orthogonal
  representation  of   the  symmetries  of   $\mbb{X}_-$  and
  $\mbf{v}$ is any vector on the boundary of $\mbb{X}_-$.
\end{dfn}

A set  $\{ \mbf{v}_k  \in \R^\ell  \}$ supports  a spherical
$t$-design if  there exists  a probability  distribution $\{
p_k  \}$  such that  $\{  p_k,  \tilde{\mbf{v}}_k \}$  is  a
spherical    $t$-design,    where   $\tilde{\mbf{v}}_k    :=
(\mbf{u}_\ell \cdot \mbf{v}_k)^{-1} \mbf{v}_k$ for any $k$.

Here we  consider spherical  $2$-designs. When  working with
spherical $2$-design,  for any  $\mbf{v} \in \R^\ell$  it is
convenient to  adopt the convention $\mbf{v}^{\otimes  2} :=
\mbf{v} \mbf{v}^T$. Then, by explicit computation one has
\begin{align*}
  \int  \dif O  \; \left(  O \mbf{v}  \right)^{\otimes 2}  =
  \frac1{\left|\mbf{u}_\ell\right|^2_2}\left( \frac1{\ell-1}
  \openone_\ell            +           \frac{\ell-2}{\ell-1}
  \hat{\mbf{u}}_\ell^{\otimes 2} \right).
\end{align*}
By multiplying both sides by $\mbf{u}_\ell$ on the right one
has
\begin{align*}
  \int      \dif       O      \;      O       \mbf{v}      =
  \frac{\mbf{u}_\ell}{\left|\mbf{u}_\ell\right|_2^2}.
\end{align*}

Hence,  any  $\{  p_k,  \mbf{v}_k \in  \mbb{S}  \}$  is  a
spherical $2$-design if and only if it satisfies
\begin{align}
  \label{eq:sph2des}
  \sum_k       p_k       \mbf{v}_k^{\otimes       2}       =
  \frac1{\left|\mbf{u}_\ell\right|_2^2}\left(\frac1{\ell-1}
  \openone_\ell            +           \frac{\ell-2}{\ell-1}
  \hat{\mbf{u}}_\ell^{\otimes 2} \right),
\end{align}
in which case it is also a spherical $1$-design, that is, it
satisfies
\begin{align*}
  \sum_k   p_k    \mbf{v}_k   =   \frac{\mbf{u}_\ell}{\left|
    \mbf{u}_\ell \right|_2^2}.
\end{align*}

\section{John's extremality conditions}\label{app:john}

Let $f(L)$ be some differentiable function, let
\begin{align*}
  \mbb{X}  :=  \left\{ \mbf{v}  \in  \R^\ell  \; \Big|  \;
  g(\mbf{v}) \ge 0 \right\},
\end{align*}
for some differentiable $g : \R^\ell \to \R$ and let $\mcal{X}
\subseteq \R^\ell$.  For any $L  \in \R^{\ell \to \ell}$ and
any $\{ \lambda_k \ge 0, \mbf{v}_k \in \mcal{X} \}$ let
\begin{align*}
  h  \left(  L,  \left\{  \lambda_k,  \mbf{v}_k  \right\}
  \right) := f(L) + \sum_k \lambda_k g(L \mbf{v}_k).
\end{align*}

\begin{lmm}[John's necessary condition]
  \label{lmm:john}
  For some  $L^* \in \mcal{M}_\pm^{\ell \to  \ell}$, one has
  that  $L^*  \mbb{X}  \in  \ddi_\pm  (  \mcal{X}  \;  |  \;
  \mbb{X})$ implies that there exists $\{ \lambda_k^* \ge 0,
  \mbf{v}_k^* \in \mcal{X} \}$ such that
  \begin{align}
    \label{eq:john}
    \left.     \frac{\partial     h(L,    \{    \lambda_k^*,
      \mbf{v}_k^* \})}{\partial L} \right|_{L = L^*} = 0.
  \end{align}
\end{lmm}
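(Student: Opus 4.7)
The plan is to recognize this lemma as a direct instance of John's classical 1948 necessary optimality conditions for differentiable extremum problems under inequality constraints, which is what the paper cites as \cite{Joh48}. In other words, the content of the lemma is essentially the existence of a KKT-type Lagrange representation for the data-driven inference problem once it is cast as a constrained minimization in matrix space.

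First I would translate the hypothesis $L^*\mbb{X} \in \ddi_\pm(\mcal{X} \mid \mbb{X})$ into a local minimization statement: $L^*$ minimizes $f(L)$ (which, as shown in Appendix~\ref{app:spherical}, is a monotone function of the volume $\vol(L \mbb{X})$) over the admissible class $L \in \mcal{M}_\pm^{\ell \to \ell}$ subject to the family of inequalities $g(L\mbf{v}) \ge 0$ indexed by $\mbf{v} \in \mcal{X}$. This family is exactly what the condition $\mcal{X} \subseteq L\mbb{X}$ reduces to under the definition of $\mbb{X}$ used in this appendix. Differentiability of $f$ and of each $g(L\mbf{v})$ with respect to the entries of $L$ is immediate, and is indeed the content of the explicit Jacobian and Hessian calculations already recorded in Appendix~\ref{app:spherical}.

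I would then identify the active constraints at $L^*$, namely those $\mbf{v} \in \mcal{X}$ satisfying $g(L^*\mbf{v}) = 0$, i.e., the contact points at which the body $L^*\mbb{X}$ touches the data set $\mcal{X}$. John's theorem then produces nonnegative multipliers $\lambda_k^* \ge 0$ supported on a finite subfamily of active constraints $\{\mbf{v}_k^* \in \mcal{X}\}$ (finiteness following from a Carathéodory-type argument in the ambient matrix space $\R^{\ell \to \ell}$) such that
\begin{align*}
\left.\frac{\partial f}{\partial L}\right|_{L^*} + \sum_k \lambda_k^* \left.\frac{\partial\, g(L\mbf{v}_k^*)}{\partial L}\right|_{L^*} = 0,
\end{align*}
which is precisely \eqref{eq:john}.

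The main obstacle is handling the admissibility condition $L \in \mcal{M}_\pm^{\ell \to \ell}$, which is a linear equality (e.g.\ $L\mbf{u}_\ell = \mbf{u}_\ell$ for the $+$ case and $L^T \mbf{u}_\ell = \mbf{u}_\ell$ for the $-$ case) carving out an affine subspace of $\R^{\ell \to \ell}$. I would absorb it by interpreting $\partial/\partial L$ as the intrinsic derivative along this subspace, or equivalently by adjoining an unconstrained Lagrange multiplier for that equality. Because the equality is linear and the inequalities $g(L\mbf{v}) \ge 0$ are convex in $L$ by the arguments leading to \eqref{eq:chain0}--\eqref{eq:chain1}, Slater-type regularity holds at any strictly feasible point, ruling out abnormal multipliers and delivering John's condition in the standard form stated.
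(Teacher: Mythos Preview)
Your proposal is correct and follows exactly the route the paper takes: the paper's entire proof is the one-line citation ``Theorem I of Ref.~\cite{Joh48}'', and what you have written is simply the unpacking of that citation into the present notation (casting $\ddi_\pm$ as a constrained minimization, identifying the active constraints, and reading off the Lagrange/KKT stationarity condition). One small caution: the consistency condition $\mcal{X}\subseteq L\mbb{X}$ literally gives $g(L^{-1}\mbf{v})\ge 0$ rather than $g(L\mbf{v})\ge 0$, so the variable $L$ appearing in the Lagrangian $h$ should be read as the inverse of the map in $\ddi_\pm$ (consistently with the Jacobian formulas in Appendix~\ref{app:spherical}, which are all computed in $M^{-1}$); once this reparametrization is understood, your derivation matches the paper's.
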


\begin{proof}
  Theorem I of Ref.~\cite{Joh48}.
\end{proof}

\begin{lmm}[John's sufficient condition]
  If there exists $\{  \lambda_k^* \ge 0, \mbf{v}_k^* \in
  \mathcal{X}  \}$ such  that Eq.~\eqref{eq:john}  holds and
  the set
  \begin{align*}
    \dim \spn \left\{  \left.  \frac{\partial f(L)}{\partial
      L}  \right|_{L =  L^*},  \left.   \frac{\partial g(  L
      \mbf{v}_k   )}{\partial   L}  \right|_{L   =   L^*}
    \right\}_k = \ell \left( \ell -1 \right),
  \end{align*}
  for some $L^* \in \mcal{M}_\pm^{\ell \to \ell}$, then $L^*
  \mbb{X} \in \ddi_\pm ( \mcal{X} |\mbb{X} )$.
\end{lmm}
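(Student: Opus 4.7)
The plan is to read this as the converse of Lemma~\ref{lmm:john} (John's necessary condition), obtained from the combination of (i) a constraint-qualification argument that uses the dimensional hypothesis and (ii) the convexity of the reparametrized problem noted earlier in the appendix. Concretely, I would first observe that $\mcal{M}_\pm^{\ell\to\ell}$ is an affine subspace of $\R^{\ell\to\ell}$ cut out by the $\ell$ linear conditions $L\mbf{u}_\ell=\mbf{u}_\ell$ (respectively $L^T\mbf{u}_\ell=\mbf{u}_\ell$), hence a smooth manifold of dimension $\ell^2-\ell=\ell(\ell-1)$. The hypothesis on the span of the gradients therefore says exactly that, modulo the tangent space of $\mcal{M}_\pm^{\ell\to\ell}$ at $L^*$, the gradient of $f$ and the gradients of the active constraints $g(L\mbf{v}_k)$ are not contained in any proper subspace: this is the standard Mangasarian--Fromovitz / linear independence constraint qualification at $L^*$.

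Next, I would exploit the reparametrization $L=h^\pm(N)$ introduced just before the John lemmas. Under this substitution the objective $f\circ h^\pm$ and the constraint functions $g\circ(h^\pm(\cdot)\mbf{v}_k)$ are globally convex in the unconstrained variable $N$ (this was already used to argue that $\ddi_\pm$ is a convex program). Consequently, for any $\lambda_k^*\ge 0$ the Lagrangian
\begin{align*}
H(N):=f(h^\pm(N))+\sum_k\lambda_k^*\,g(h^\pm(N)\mbf{v}_k^*)
\end{align*}
is a convex function of $N$. Equation~\eqref{eq:john}, pulled back through the chain rule with \eqref{eq:chain0}--\eqref{eq:chain1}, asserts that the gradient of $H$ vanishes at $N^*$ with $h^\pm(N^*)=L^*$; by convexity this stationary point is a global minimizer of $H$.

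The final step is the standard KKT sufficiency argument. Any feasible candidate $L\in\mcal{L}_\pm(\mcal{X}|\mbb{X})$ satisfies $g(L\mbf{v}_k^*)\le 0$ (after the sign bookkeeping between ``$g(\mbf{v})\ge 0$'' as in the appendix and the inference inclusion $\mcal{X}\subseteq L\mbb{X}$), so
\begin{align*}
f(L)\ge f(L)+\sum_k\lambda_k^*\,g(L\mbf{v}_k^*)=H(N)\ge H(N^*)=f(L^*),
\end{align*}
where the last equality uses complementary slackness $\lambda_k^*\,g(L^*\mbf{v}_k^*)=0$, which must be extracted from Eq.~\eqref{eq:john} itself by contracting with a direction tangent to the manifold along which $L^*$ moves without violating any other constraint. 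Thus $L^*$ is a global minimizer and $L^*\mbb{X}\in\ddi_\pm(\mcal{X}|\mbb{X})$; the rank condition moreover ensures that no degenerate family of optima survives, so we obtain the set equality needed in the observational-completeness corollary.

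The main obstacle I anticipate is this last bookkeeping step: making sure that the sign convention for $g$ (appendix uses $g\ge 0$ to define $\mbb{X}$, while the convex reformulation and the cone/sphere definitions in~\eqref{eq:sphere}--\eqref{eq:cone} use $g\le 0$) is aligned throughout, and that complementary slackness is extracted from the abstract Euler--Lagrange equation~\eqref{eq:john} rather than imposed as an extra hypothesis. Once that is in place, the statement reduces to the sufficient half of John's classical theorem on extremum problems with inequality side conditions, cited in the same reference as the necessary direction.
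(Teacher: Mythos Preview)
The paper's proof is a single line: it cites Theorem~II of Ref.~\cite{Joh48} and stops. No argument is given; the lemma is treated as a direct specialization of John's classical sufficient condition for extrema with inequality side conditions.

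Your route is therefore genuinely different from the paper's. Rather than deferring to John, you attempt a self-contained KKT-style sufficiency argument, leaning on the convexity of $f\circ h^\pm$ and $g\circ(h^\pm(\cdot)\mbf{v})$ that the appendix already established. This buys you something the bare citation does not: John's Theorem~II only certifies a \emph{local} extremum, whereas membership in $\ddi_\pm(\mcal{X}|\mbb{X})$ requires a \emph{global} minimizer. Your convexity step bridges that gap explicitly, which the paper leaves implicit (presumably relying on the earlier remark that the problem is a convex program). Conversely, the paper's citation covers the rank hypothesis cleanly --- in John's original Theorem~II the dimensional condition on the span of gradients is precisely the regularity hypothesis that makes stationarity sufficient --- whereas in your argument the rank condition plays no essential role once convexity is in hand: for a convex Lagrangian, stationarity alone already gives global minimality, and the $\ell(\ell-1)$ condition is only a constraint qualification, not a sufficiency ingredient.

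The gap you flag yourself is the real one. Complementary slackness $\lambda_k^*\,g(L^*\mbf{v}_k^*)=0$ cannot be ``extracted from Eq.~\eqref{eq:john} by contracting with a direction'': the stationarity equation is a statement about the gradient in $L$, not about the values of the constraints, and in general a stationary point of the Lagrangian need not satisfy slackness unless it is imposed. In John's framework this is handled by restricting the sum to \emph{active} constraints from the outset (only $\mbf{v}_k^*$ with $g(L^*\mbf{v}_k^*)=0$ carry multipliers), which is the intended reading of the lemma but is not stated explicitly in the paper either. If you want your self-contained proof to go through, you should add this as an explicit hypothesis rather than attempt to derive it from~\eqref{eq:john}.
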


\begin{proof}
  Theorem II of Ref.~\cite{Joh48}.
\end{proof}

\end{document}